\newcounter{MYtempeqncnt}
\newtheorem{theorem}{Theorem}
\newtheorem{lemma}{Lemma}
\newtheorem{remark}{Remark}
\newtheorem{Proposition}{Proposition}
\tikzset{
    %Define standard arrow tip
    >=stealth',
    %Define style for boxes
    punkt/.style={
           rectangle,
           rounded corners,
           draw=black, very thick,
           text width=6.5em,
           minimum height=2em,
           text centered},
    % Define arrow style
    pil/.style={
           ->,
           thick,
           shorten <=2pt,
           shorten >=2pt,}
    pir/.style={
           <-,
           thick,
           shorten <=2pt,
           shorten >=2pt,}
}
\definecolor{KeynoteRed}{rgb}{.678,.051, .051}
\definecolor{KeynoteBlue}{rgb}{0.008, 0.443, 0.60}
\definecolor{KeynoteLightblue}{rgb}{.635, .914, .973}
\definecolor{KeynoteYellow}{rgb}{0.859, 0.584, 0.212}
\definecolor{KeynoteYellow}{rgb}{0.859, 0.584, 0.212}
\definecolor{KeynoteSlate}{rgb}{0.239, 0.271, 0.322}
\definecolor{KeynoteGray}{rgb}{0.498, 0.529, 0.529}
\definecolor{KeynoteTextGray}{rgb}{0.325, 0.325, 0.325}
\definecolor{KeynoteLightGray}{rgb}{0.706, 0.706, 0.706}
\definecolor{KeynoteBlueGray}{rgb}{0.471, 0.533, 0.620}
\definecolor{ECEpurple}{rgb}{.169, .18, .455}
\definecolor{ECEcyan}{rgb}{.41, .62, .72}
\definecolor{ECEgray}{rgb}{.788, .827, .859}
\definecolor{ECEblueGray}{rgb}{61.2, 70.6, 70.6}
\definecolor{ECEblueGray}{rgb}{61.2, 70.6, 70.6}
\definecolor{RiceBlue}{rgb}{0, .14, .41}
\begin{document}

\title{Distributed Full-duplex via Wireless Side-Channels: Bounds and Protocols}
\author{
      Jingwen Bai and Ashutosh Sabharwal, \emph{Fellow}, IEEE
\thanks{%
Manuscript received December 20, 2012; revised March 29, 2013; accepted May 25, 2013. The associate
editor coordinating the review of this paper and approving it for
publication was Prof. S. Valaee.}
\thanks{%
The authors are with Department of Electrical and Computer Engineering
      Rice University, Houston, TX 77005, USA (e-mail: \{jingwen.bai, ashu\}@rice.edu).}
\thanks{%
This work was partially supported by NSF CNS-1012921 and NSF CNS-1161596.}
\thanks{%
Digital Object Identifier 10.1109/TWC.2013.122015.}}

%% adding vertical space
%\vspace{15pt}

\maketitle

\markboth{IEEE Transactions on Wireless Communications, Vol. 12, No.
8, Month 2013}{Bai \MakeLowercase{and} Sabharwal:
 Distributed Full-duplex via Wireless Side-Channels: Bounds and Protocols}

\pubid{1536-1276/12\$31.00~\copyright~2013 IEEE}

\pubidadjcol

\begin{abstract}
%In a full-duplex network where the infrastructure node communicates with uplink and downlink simultaneously in the same band, there is one impeding factor, inter-node interference~(INI). To alleviate INI, we propose a Decode-and-Cancel scheme for transceiver design by leveraging a side channel based on the multi-radio interfaces on current mobile devices. Given the power constraint, the achievable sum-rate of DC is derived and compared for both Gaussian and fading channels. The results show that the optimal power allocated to the side channel decreases and the sum-rate of DC increases as INI and bandwidth ratio of side channel to main channel increase. Also, increasing transmit power enhances sum-rate of DC, and asymptotically DC can achieve a multiplexing gain of 2. Furthermore, the proposed DC scheme enjoys the largest rate gain when there is no full-duplex gain over half-duplex.
%\Emph{Jingwen, I Don'T Recommend Starting A Document With `Since,' So Rewrite The First Sentence. Also, The Current First Sentence Assumes That The Reader Knows About The Issue Of Extra Antennas And Rf For Full-Duplex. }
% Since The State Of Art In Full-Duplex Wireless Communication Requires Extra Antennas And Rf Resources, The Initial Application Of Full-Duplex Will Likely Be At Infrastructure Nodes. Most recent study on full-duplex has been focused on bidirectional communications, in this paper, we consider a different full-duplex communication scenario where the traffic is asymmetric.
 In this paper, we study a three-node full-duplex network, where a base station is engaged in simultaneous up- and downlink communication in the same frequency band with two half-duplex mobile nodes. To reduce the impact of inter-node interference between the two mobile nodes on the system capacity, we study how an orthogonal side-channel between the two mobile nodes can be leveraged to achieve full-duplex-like multiplexing gains. We propose and characterize the achievable rates of four distributed full-duplex schemes, labeled bin-and-cancel, compress-and-cancel, estimate-and-cancel and decode-and-cancel.
 Of the four, bin-and-cancel is shown to achieve within 1~bit/s/Hz of the capacity region for all values of channel parameters. In contrast, the other three schemes  achieve the near-optimal performance only in certain regimes of channel values. Asymptotic multiplexing gains of all proposed schemes are derived to show that the side-channel is extremely effective in  regimes where inter-node interference has the highest impact.
\end{abstract}
\begin{keywords}
Full-duplex wireless, wireless side-channels, interference management, device-to-device communication, capacity region.
\end{keywords}

\section{Introduction}
\PARstart{C}{urrent} deployed wireless communication systems adopt either time-division or frequency-division for transmission and reception. Recently, many researchers have implemented full-duplex wireless communication systems which support simultaneous transmission and reception in the same frequency band. In these system level implementations, different self-interference cancellation mechanisms have been proposed, including passive self-interference suppression~(e.g., \cite{evan,sahai2011pushing,duarte2012design}) and active self-interference cancellation (e.g., \cite{Choi,duarte2011experiment,radunovic2010rethinking,duarte2012design}). While the first set of experiments were all short-range, recent systems which employ antenna isolation at infrastructure nodes~\cite{Everett12Thesis} have reported full-duplex communication ranges in the order of 100s of meters, which are suitable for small cells~\cite{Smallcell}.
%Besides physical layer implementation of full-duplex, full-duplex media access control (MAC) design was proposed in \cite{sahai2011pushing}. In \cite{}, a full-duplex OFDM multi-antenna system is designed for practical IEEE 802.11 networks.}
\begin{figure}[htbp]
\begin{center}
\input{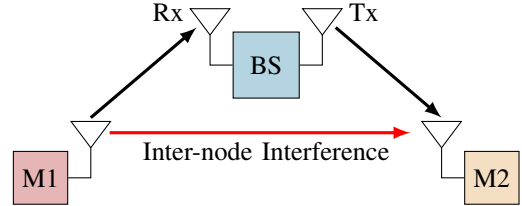}
\caption{\emph{Three-node full-duplex network: inter-node interference becomes an important factor when the infrastructure node communicates with uplink and downlink mobile nodes simultaneously.}}
\label{fig.1}
\end{center}
\end{figure}
The feasibility of full-duplex in short to medium ranges opens new design opportunities for wireless communication systems. We envision that the first adoption of full-duplex will be in small cell infrastructure nodes, since infrastructure nodes have more flexibility in their physical size and shape compared to mobile nodes.  Thus, we expect that small form-factor mobile nodes will continue to be half-duplex in the near future. Within this context, we focus on studying the performance of a network of  half-duplex mobiles being served by a full-duplex base station~(BS).

Consider the three-node network shown in Figure~\ref{fig.1}. A full-duplex infrastructure node communicates with two half-duplex mobiles simultaneously to support one uplink and one downlink flow. The network has to deal with two forms of interferences: self-interference at the full-duplex infrastructure node, and inter-node interference (INI) from uplink mobile~M1 to downlink mobile~M2. The two forms of interferences differ in one important aspect: the self-interfering signal is known at the interfered receiver, since the transmitter and receiver are co-located at BS, but the INI is not known at the receiver Node~M2. In a small cell scenario, we assume that the self-interference at the infrastructure has been suppressed to the noise floor (see discussion in first para, in~\cite{Everett12Thesis}) and hence focus on INI. To the best of the authors' knowledge, this paper is the first to study the impact of INI in the three-node full-duplex network and the mitigation of INI using advanced interference management techniques.\pubidadjcol

In this paper, we study how a wireless side-channel between Nodes~M1 and~M2 can be leveraged to manage INI. Conceptually, one can model the co-location of transmitter and receiver on a full-duplex node as an infinite capacity side-channel. Thus, our use of a wireless side-channel between~M1 and~M2 mimics the inherent full-duplex side-channel but has finite bandwidth and signal-to-noise ratio~(SNR) like any practical wireless channel. The wireless side-channel model is inspired by the fact that most smartphones support simultaneous use of multiple standards, and can thus access multiple orthogonal spectral bands. For example, the main network could be on a cellular band while the M1$\rightarrow$M2 wireless side-channel could be on an unlicensed ISM band, thus creating a novel use of ISM-bands in licensed cellular networks.
Based on the above discussion, we label the protocols for communication in side-channel assisted three-node network as \emph{distributed full-duplex}.

Our contributions in this paper are two-fold. First, we propose four distributed full-duplex inter-node interference cancelation schemes. The four schemes are labeled bin-and-cancel~(BC), compress-and-cancel~(CC), decode-and-cancel~(DC)~\cite{Jingwen} and estimate-and-cancel~(EC). All schemes rely on the side-channel to send information about the INI from Node~M1 to Node~M2. Since the side-channel is an orthogonal channel, Node~M1 uses the side-channel signal as side information while decoding its signal of interest from BS. As the names suggest, all four schemes encode the M1 signal on the side-channel in different ways. The most sophisticated of the four, bin-and-cancel\footnote{A similar approach is also adopted in \cite{Zchannel} for Z-interference channel with a relay link.}, uses Han-Kobayashi style common-private message splitting and can achieve within 1~bit/s/Hz of the capacity region for \emph{all} values of channel parameters. The other three schemes are simpler compared to bin-and-cancel but achieve the near-optimal (finite approximation) performance only in certain regimes of channel values; we derive exact regions of approximate optimality for decode- and estimate-and-cancel.

Second, we derive the asymptotic multiplexing gains of bin-, compress-, decode-, estimate-and-cancel schemes. We show analytically that the side information can be highly beneficial in increasing the multiplexing gain of the system exactly in those regimes where INI has the highest impact. We provide an exact characterization of how the extra bandwidth of the side-channel can be leveraged to achieve the multiplexing gains. As expected, the multiplexing gain scales with the bandwidth of the side-channel and can reach the maximum value of 2, signifying perfect full-duplex. Finally, we show numerically that significant multiplexing gains are available for finite SNRs of practical interest, and thus our analysis makes a case for using wireless spectrum in a more flexible manner (e.g\ unlicensed bands in licensed spectrum).

%Often, in practical wireless communication, the bandwidth of side-channel is different from the main channel. For example, ISM band radio usually can operate on a bandwidth up to 80~MHz in 2.4~GHz. In contrast, generally the bandwidth of current 3G/4G radio is 20~MHz. From our proposed schemes, we find out that the improvement of the achievable sum-rates for DC and BC schemes will scale with the bandwidth ratio of the side-channel to that of the main channel when the bandwidth ratio is small and become insensitive for large bandwidth ratio. We can show that as the bandwidth ratio increases, the optimal power allocated to the side-channel will decrease. In fact, only a small portion of transmit power needs to be allocated to the side-channel in the interference regime where the proposed schemes offer superior performance.
The rest of paper is organized as follows. Section 2 presents the system model.
In Section 3, we propose four distributed full-duplex interference cancellation schemes by leveraging the side-channel and obtain information theoretic results on their achievable rate regions for Gaussian channels. Results on the capacity region of side-channel assisted three-node network are also presented in this section. Section 4 characterizes multiplexing gains of all proposed schemes in high SNR limit and finite SNR regime.
Section 5 concludes the paper.

\section{System Model \label{sec:model}}
In this section, we will describe the system model to be used for the rest of the paper. Since we assume that the self-interference at the full-duplex base station is below the noise floor, the resulting network can be modeled as a Z-like channel with an orthogonal side-channel as shown in Figure~\ref{fig.10}.
%Let ($P_1, W_m$), ($(1-\lambda) P_2, W_m$) and ($\lambda P_2, W_s$) denote the power and bandwidth available for the BS-M2 channel, M1-BS channel and M1-M2 side channel, respectively, where $\lambda\in[0,1]$
% \textcolor{red}{Jingwen, this does not seem like the right way to mention power constraints. We should only mention that the total power budget available at M1}.
%\textcolor{blue}{Ashu, I've changed this part}
%\textcolor{red}{Jingwen, probably too late to change now but I found using $P_1$ for BS and $P_2$ for M1 confusing. $P_1$ should be for M1 and $P_2$ for M2.}
Each node is assumed to have a bounded power budget, so we denote the power constraint at BS and Node~M1 as $P_1$ and $P_2$, respectively. Further let $W_m$ denote the bandwidth of the main-channel, and let $W_s$ denote the bandwidth of the side-channel.
Parameter $W=\frac{W_s}{W_m}$ represents the bandwidth ratio of the side-channel to that of the main-channel, where $W\in[0,W_{\rm max}]$.
\begin{figure}[htbp]
\begin{center}
\input{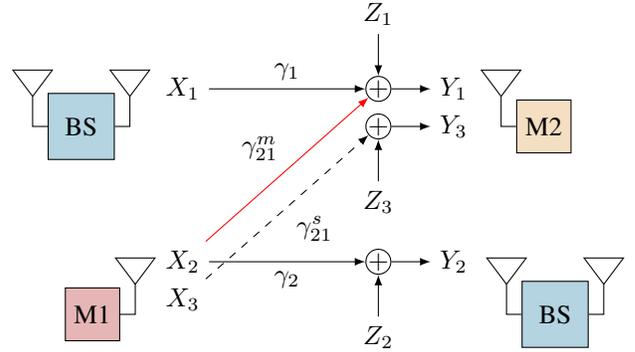}
\caption{\emph{System model: Z-like channel with side-channel.}}
\label{fig.10}
\end{center}
\end{figure}

The following equations capture the input-output relationship between transmitted and received signals:%We use the equations below to represent our model.
\begin{equation}\label{systemequ}
%\begin{small}
\begin{aligned}
Y_1=&\gamma_1X_1+\gamma^m_{21}X_2+Z_1 \\
Y_2=&\gamma_2X_2+Z_2   \\
Y_3=&\gamma^s_{21}X_3+Z_3, 
\end{aligned}
%\end{small}
\end{equation}
%\textcolor{red}{Jingwen, Does this power constrain mean we are assuming that all $X_2$ and $X_3$ are independent? If they were correlated, shouldn't the power constraint should be $\mathbb{E}|X_2+X_3|^2 \leq P_2$.}
%\textcolor{blue}{Yes, that's a very good point. In fact, only for estimate-and-cancel scheme, they are correlated, for other schemes, they are all independent.}
where for $i=1,2,3$, $X_i$ is the codeword with input power constraint $\mathbb{E}(|X_1|{}^2)\leq P_1$, $\mathbb{E}(|X_2+X_3|^2) \leq P_2$. Let $Z_i$ be independently and identically distributed~(i.i.d.) white Gaussian noise with zero mean and variance of $\sigma^2 $.
%\textcolor{red}{Jingwen, Does the square-root notation mean that all our channels are real, not complex?}
%The coefficients $\sqrt{\gamma_1}, \sqrt{\gamma_2}$ are the channel gains of the direct link from the transmitters to their intended receivers,  $\sqrt{\gamma_{21}}$ is the channel gain of the interference link from M1 to M2, and $\sqrt{\gamma_3}$ be the channel gain of the side-channel from M1 to M2.
The coefficients $\gamma_1, \gamma_2$ are the channel gains of the direct link from the transmitters to their intended receivers,  $\gamma^m_{21}$ is the channel gain of the interference link from M1 to M2, and $\gamma^s_{21}$ is the channel gain of the side-channel from M1 to M2.\footnote{All channels are assume to be real for simplicity and all results can be extended to complex channels.}
%\textcolor{red}{Jingwen, I suggest to label channels $\gamma_{11},\gamma_{22},\gamma^m_{21},\gamma^s_{21}$, so it is clear from sub/super-scripts the transmitter-receiver link.}
%%
To simplify the notation, let ${\rm SNR_1}=\frac{|\gamma_1|^2 P_1}{\sigma^2}, {\rm SNR_2}=\frac{|\gamma_2|^2P_2}{\sigma^2}, {\rm INR}=\frac{|\gamma^m_{21}|^2P_2}{\sigma^2}$ and ${\rm SNR_{\rm side}}=\frac{|\gamma^s_{21}|^2P_2}{\sigma^2}$.

\section{Four Schemes for Side-channel Assisted Three-node Network}
In this section, we propose four distributed full-duplex interference cancellation schemes by using a wireless side-channel between the two mobile nodes.
The four schemes are labeled bin-and-cancel~(BC), compress-and-cancel~(CC), decode-and-cancel~(DC) and estimate-and-cancel~(EC).
Each scheme leverages the side-channel in a different manner to send information about INI from Node~M1 to Node~M2. Since the side-channel operates on an
orthogonal band compared to the main-channel, Node~M2 can use the side information obtained from the side-channel while decoding its signal of interest from BS.
%Of the four, bin-and-cancel is the most sophisticated, uses Han-Kobayashi style common-private message splitting and can achieve
%within 1~bits/s/Hz of the capacity region for \emph{all} values of channel parameters. The other three schemes are simpler compared to bin-and-cancel
%but achieve the near-optimal (finite approximation) performance only in certain regimes of channel values; we derive exact regions of approximate
%optimality for decode- and estimate-and-cancel.
%Then we characterize the capacity gap of the proposed schemes.

\subsection{Bin-and-cancel} \label{BCscheme}
In bin-and-cancel, we use Han-Kobayashi \cite{han1981new} style common-private message splitting.
The main idea behind BC scheme is to divide the interfering message of Node~M1 into two parts: private message which can only be decoded
at the intended receiver Node~BS, and common message which can be decoded at both receivers. Next, we partition the common message of M1 into equal size bins, encode the bin indexes into codewords and send through the side-channel. The number of bin indexes is determined by the rate of the side-channel.
At the interfered receiver M2, by decoding the common message of the interfering signal from M1, part of the
interference can be subtracted out while treating the remaining private message of M1 as noise. The bin index can first be decoded from the side-channel. Then with the help of the bin index, the uncertainty of decoding the common message of M1
can be resolved, allowing sending more common message of M1 and mitigating INI.

In Gaussian channels, we first split the total transmit power of Node~M1 between side-channel and main-channel, then split the power allocated to
the main-channel between the common and private message. Assuming inputs are i.i.d. Gaussian distributed satisfying the per-node power constraint, we have the following achievability result.
\begin{figure}[t!] %[h!]
  \centering
%\flushleft
%\flushright
%\scalebox{0.5}{\input{bc.pstex_t}}
    \includegraphics[width=0.5\textwidth,trim = 70mm
      85mm 40mm 60mm, clip]{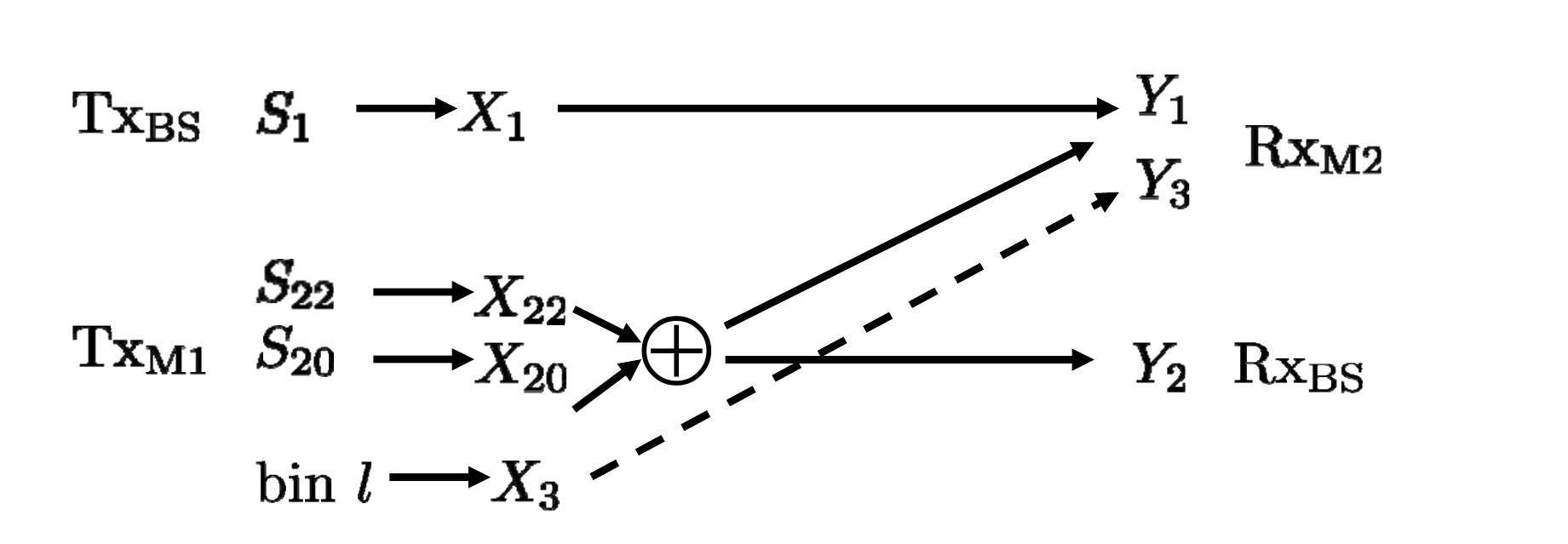}
  \caption{\emph{Depiction of bin-and-cancel.}}
\label{BC-diagram.}
\end{figure}
\begin{Proposition}\label{BCrate}For Gaussian side-channel assisted three-node full-duplex network, in the weak interference regime defined by $\rm INR\leq SNR_2$,
\footnote{When $W=0$, we define $W\mathrm{log}\left(1+\frac{x}{W}\right)\triangleq0$.} the following rate region is achievable for bin-and-cancel,
%\begin{small}
\begin{gather}
\begin{aligned}
R_{1}\leq& C\left(\frac{\mathrm{SNR_1}}{1+\beta\bar{\lambda}\rm INR}\right)\\
R_2\leq& \min\!\left\lbrace C\left(\bar{\lambda}\mathrm{SNR_2}\right),C\left(\beta\bar{\lambda}\mathrm{SNR_2}\right)\right.\\
&\left.+C\left(\frac{\bar{\beta}\bar{\lambda}\mathrm{INR}}{1+\beta\bar{\lambda}\mathrm{INR}}\right)\!+\!WC\left(\frac{\mathrm{\lambda SNR_{side}}}{W}\right)\!\right\rbrace\\
R_1+R_2\leq&C\left(\beta\bar{\lambda}\mathrm{SNR_2}\right)+C\left(\frac{\mathrm{SNR_1}+\bar{\beta}\bar{\lambda}\mathrm{INR}}{1+\beta\bar{\lambda}\mathrm{INR}}\right)\\
&+WC\left(\frac{\lambda \mathrm{SNR_{side}}}{W}\right),
\label{bc-sum}
\end{aligned}
\end{gather}
%\end{small}
where $\beta, \lambda\in[0,1], \beta+\bar{\beta}=1, \lambda+\bar{\lambda}=1$ and $C(X)=W_m{\rm log}\left(1+X\right)$.

In the strong interference regime defined by $\rm SNR_2\leq INR\leq SNR_2\left(1+SNR_1\right)$, the achievable rate region is
\begin{gather}
\begin{aligned}
 R_{1}\leq& C\left(\rm SNR_1\right)\\
R_2\leq&  C\left(\bar{\lambda}\rm SNR_2\right)\\
R_1+R_2\leq&C\left(\mathrm{SNR_1}+\bar{\lambda}\mathrm{INR}\right)+WC\left(\frac{\mathrm{\lambda SNR_{side}}}{W}\right). \label{BCstrong}
\end{aligned}
\end{gather}

In the very strong interference regime defined by $\rm INR\geq SNR_2\left(1+SNR_1\right)$, the capacity region is
%\end{equation}
%$\lbrack6\rbrack$%
\begin{gather}
\begin{aligned}
  R_{1}\leq& C\left(\rm SNR_1\right)\triangleq C_1\\
   R_{2}\leq& C\left(\rm SNR_2\right)\triangleq C_2 \label{PTP}
\end{aligned}
\end{gather}
\end{Proposition}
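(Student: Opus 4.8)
The plan is to prove the three claims separately. The weak- and strong-interference statements are pure achievability results, obtained by analyzing the bin-and-cancel scheme with i.i.d.\ Gaussian inputs and then eliminating the auxiliary rates; the very strong regime additionally needs a (simple) matching converse. Throughout, BS sends a Gaussian codeword $X_1$ of power $P_1$, and Node~M1 splits its budget as $\lambda P_2$ on the side-channel (carrying $X_3$) and $\bar\lambda P_2$ on the main-channel, the latter further split into a common codeword $X_{2c}$ of power $\bar\beta\bar\lambda P_2$ and a private codeword $X_{2p}$ of power $\beta\bar\lambda P_2$, so $X_2=X_{2c}+X_{2p}$; M1's message is split into $(W_{2p},W_{2c})$ of rates $(R_{2p},R_{2c})$, the common sub-message $W_{2c}$ is partitioned into $2^{nR_s}$ equal-size bins, and $X_3$ encodes only the bin index.

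\emph{Weak regime.} Decoding at BS: since $Y_2=\gamma_2X_2+Z_2$ is interference-free, a joint MAC decoder recovers $(W_{2p},W_{2c})$ if $R_{2p}\le C(\beta\bar\lambda\mathrm{SNR}_2)$, $R_{2c}\le C(\bar\beta\bar\lambda\mathrm{SNR}_2)$ and $R_{2p}+R_{2c}\le C(\bar\lambda\mathrm{SNR}_2)$. Decoding at M2: first recover the bin index from $Y_3=\gamma^s_{21}X_3+Z_3$, which succeeds if $R_s\le WC(\lambda\mathrm{SNR}_{\mathrm{side}}/W)$ (the factor $W=W_s/W_m$ is the bandwidth ratio, $\lambda$ the side-channel power fraction, and the $/W$ the bandwidth normalization of power); then, knowing the bin, jointly decode $(W_1,W_{2c})$ from $Y_1$ while treating the private signal $X_{2p}$ as Gaussian noise of normalized power $\beta\bar\lambda\mathrm{INR}$. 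Writing $N=1+\beta\bar\lambda\mathrm{INR}$, the standard joint-typicality error analysis gives $R_1\le C(\mathrm{SNR}_1/N)$, $R_{2c}-R_s\le C(\bar\beta\bar\lambda\mathrm{INR}/N)$ and $R_1+R_{2c}-R_s\le C((\mathrm{SNR}_1+\bar\beta\bar\lambda\mathrm{INR})/N)$, the $-R_s$ reflecting that the bin index resolves $R_s$ bits of the common message.

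It remains to perform Fourier--Motzkin elimination of $R_s$, $R_{2p}$ and $R_{2c}$ (with $R_2=R_{2p}+R_{2c}$) over the above inequalities together with nonnegativity and $R_s\le\min\{WC(\lambda\mathrm{SNR}_{\mathrm{side}}/W),R_{2c}\}$. Eliminating $R_s$ converts the two non-trivial M2 constraints into $R_{2c}\le C(\bar\beta\bar\lambda\mathrm{INR}/N)+WC(\lambda\mathrm{SNR}_{\mathrm{side}}/W)$ and $R_1+R_{2c}\le C((\mathrm{SNR}_1+\bar\beta\bar\lambda\mathrm{INR})/N)+WC(\lambda\mathrm{SNR}_{\mathrm{side}}/W)$; eliminating $R_{2p}$ gives $R_2\le C(\bar\lambda\mathrm{SNR}_2)$ and the lower bound $R_{2c}\ge R_2-C(\beta\bar\lambda\mathrm{SNR}_2)$. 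Substituting this lower bound on $R_{2c}$ into the previous two constraints yields precisely the second term of the $R_2$ minimum and the sum-rate bound of~\eqref{bc-sum}; one then checks that the individual BS constraints ($R_{2p}\le\cdot$, $R_{2c}\le\cdot$) and the leftover $R_1\le C((\mathrm{SNR}_1+\bar\beta\bar\lambda\mathrm{INR})/N)$ are implied (using $C(\beta x)+C(\bar\beta x)\ge C(x)$ and monotonicity of $C$). This elimination bookkeeping — and carrying the orthogonal side-channel term $WC(\lambda\mathrm{SNR}_{\mathrm{side}}/W)$ consistently, including the $W=0$ convention which collapses everything to the plain Han--Kobayashi Z-channel region — is the main technical chore; the hypothesis $\mathrm{INR}\le\mathrm{SNR}_2$ is what makes the common/private split the natural structure.

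\emph{Strong and very strong regimes.} For the strong regime I would set $\beta=0$ (no private message, $X_2=X_{2c}$): M2 now sees no private-interference noise, so the M2/BS constraints become $R_1\le C(\mathrm{SNR}_1)$, $R_2-R_s\le C(\bar\lambda\mathrm{INR})$, $R_1+R_2-R_s\le C(\mathrm{SNR}_1+\bar\lambda\mathrm{INR})$ and $R_2\le C(\bar\lambda\mathrm{SNR}_2)$; eliminating $R_s$ and noting that $\mathrm{INR}\ge\mathrm{SNR}_2$ makes $R_2\le C(\bar\lambda\mathrm{INR})+WC(\lambda\mathrm{SNR}_{\mathrm{side}}/W)$ redundant against $R_2\le C(\bar\lambda\mathrm{SNR}_2)$ produces exactly~\eqref{BCstrong}. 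For the very strong regime, achievability needs no side-channel: take $\lambda=\beta=0$; at M2, decode $X_2$ first treating $X_1$ as noise, which supports $R_2=C(\mathrm{SNR}_2)$ precisely because $\mathrm{INR}\ge\mathrm{SNR}_2(1+\mathrm{SNR}_1)$, then cancel it and decode $X_1$ at rate $C(\mathrm{SNR}_1)$, while BS decodes $X_2$ at rate $C(\mathrm{SNR}_2)$, so that the sum constraint from~\eqref{BCstrong} is slack at this corner point. The converse is the easy direction: handing M2 a genie copy of $X_2^n$ leaves a point-to-point AWGN channel with input power $P_1$, so $R_1\le C(\mathrm{SNR}_1)$; and since the side-channel is received at M2 whereas the destination for $W_2$ is BS, which observes only the clean channel $Y_2=\gamma_2X_2+Z_2$ with $\mathbb{E}|X_2|^2\le P_2$, we get $R_2\le C(\mathrm{SNR}_2)$, matching~\eqref{PTP}.
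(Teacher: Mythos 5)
Your proposal is correct and follows essentially the same route as the paper: Han--Kobayashi common/private splitting with power fractions $(\beta,\lambda)$, binning of the common message onto the side-channel, MAC decoding at BS and bin-index-then-joint decoding at M2 (the paper packages the latter as a ``MAC with side-channel'' capacity region in Appendix~A), followed by Fourier--Motzkin elimination; the strong regime is likewise obtained by setting $\beta=0$ and the very strong regime by the Carleial-type decode-the-interference argument with the trivial point-to-point converse. The only difference is expository — you carry out the elimination bookkeeping explicitly where the paper defers it to the appendix lemma and a one-line invocation of Fourier--Motzkin.
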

\begin{proof}
The encoding procedure is depicted in Figure~\ref{BC-diagram.}. Using the Han-Kobayashi common-private message splitting strategy, the common message
can be decoded at both receivers, while private message can only be decoded at the intended receiver. Message $S_1$ is of size $2^{nR_1}$, and $X_1$
is intended to be decoded at $Y_1$ only. Message $S_2$ is divided into common part $S_{20}$ of size $2^{nR_{20}}$, and private part $S_{22}$ of
size $2^{nR_{22}}$. Superpose the codewords of both $S_{22}$ and $S_{20}$ such that $X_2=X_{20}+X_{22}$.
Then partition the set
$[1:2^{nR_{20}}]$ into $2^{nR_3}$ equal size bins, $\mathcal{B}(l)=[(l-1)2^{n(R_{20}-R_3)}+1:l2^{n(R_{20}-R_3)}], l\in[1:2^{nR_3}]$, and $X_3(l)$
is sent through the side-channel over $n$ blocks. In the Gaussian channels, we assume all inputs are i.i.d. Gaussian distributed satisfying $X_1\sim\mathcal{N}(0,P_1), X_{20}\sim\mathcal{N}(0,\bar{\beta}\bar{\lambda}P_2)$, $X_{22}\sim\mathcal{N}(0,\beta\bar{\lambda}P_2)$,
and $X_{3}\sim\mathcal{N}(0,\lambda P_2)$, respectively, where $\beta, \lambda\in[0,1], \bar{\beta}+\beta=1, \bar{\lambda}+\lambda=1$.
The parameter $\lambda$ denotes the fraction of power allocated to the side-channel and $\beta$ denotes the fraction of power split for the private message of M1.

Decoding occurs in two steps. Upon receiving $Y_2, (S_{20},S_{22})$ are first decoded. The achievable rate region of $(R_{20},R_{22})$ is the capacity region of a Gaussian multiple-access channel denoted as $\mathcal{C}_1$, where %Gaussian multiple-access channel.
\begin{gather}
\begin{aligned}
R_{20}&\leq C\left(\bar{\beta}\bar{\lambda}\mathrm{SNR}_2\right)\\
R_{22}&\leq C\left(\beta\bar{\lambda}\mathrm{SNR}_2\right)\\
R_{20}+R_{22}&\leq C\left(\bar{\lambda}\mathrm{SNR}_2\right). \label{BCequ1}
\end{aligned}
\end{gather}
%\end{small}

Then upon receiving $Y_1, Y_3$, with the help of the side-channel $(S_1,S_{20})$ are decoded treating $S_{22}$ as noise. This is a multiple-access channel with side-channel.
The capacity region of such channel denoted as $\mathcal{C}_2$ is given in Appendix~\ref{Append1}. Thus we have
%\begin{small}
\begin{gather}
\begin{aligned}
R_{1}&\leq C\left(\frac{\mathrm{SNR_1}}{1+\beta\bar{\lambda}\rm INR}\right)\\
R_{20}&\leq C\left(\frac{\bar{\beta}\bar{\lambda}\mathrm{INR}}{1+\beta\bar{\lambda}\rm INR}\right)\!+\!WC\left(\frac{\mathrm{\lambda SNR_{side}}}{W}\!\right) \\
R_{1}+R_{20}&\leq C\left(\frac{\mathrm{SNR_1}+\bar{\beta}\bar{\lambda}\mathrm{INR}}{1+\beta\bar{\lambda}\rm INR}\right)\!+\!WC\left(\frac{\mathrm{\lambda SNR_{side}}}{W}\!\right).\label{BCequ2}
\end{aligned}
\end{gather}
%\end{small}
The achievable rate region of Gaussian side-channel assisted three-node network is the set of all $(R_1,R_2)$ such that $R_1, R_2=R_{20}+R_{22}$ satisfying that $(R_{20}, R_{22})\in\mathcal{C}_1$ and $(R_1, R_{20})\in\mathcal{C}_2$. Using Fourier-Motzkin elimination, we can derive the results in Proposition~\ref{BCrate}.
%Moreover, since $\mathcal{C}_1$ and $\mathcal{C}_2$ depends on $\beta$ and $\lambda$, the achievable rate region is the union of the set of all %$(R_1,R_2)$ over all choices of $\beta$ and $\lambda$.}

In the strong interference regime where $\rm SNR_2\leq INR\leq SNR_2\left(1+SNR_1\right)$, it is not difficult to find out the achievable rate region is a decreasing function of $\beta$~(see e.g.~\cite{Zchannel}). Therefore, we can obtain the achievable region by substituting $\beta=0$.

When the interference is very strong where $\rm INR\geq SNR_2\left(1+SNR_1\right)$, Carleial~(\cite{Car}) shows that the communication is not impaired by the very strong interference at all. Because the interfered receiver can first decode the interfering signal, then subtract it out from the received signal and finally decode its own message. Thus the capacity region is contained by the point-to-point capacity of uplink and downlink.
\end{proof}
\begin{figure*}[!t]
% ensure that we have normalsize text
\normalsize
% Store the current equation number.
\setcounter{MYtempeqncnt}{\value{equation}}
% Set the equation number to one less than the one
% desired for the first equation here.
% The value here will have to changed if equations
% are added or removed prior to the place these
% equations are referenced in the main text.
\setcounter{equation}{6}
\begin{equation}
\begin{aligned}
\label{ccrate}
R_1&\leq C\left(\frac{\mathrm{SNR_1}\left(1+\mathrm{SNR_1}+(1+\mathrm{SNR_1}+\bar{\lambda}\mathrm{INR})\left[\left(1+\frac{\lambda \rm SNR_{side}}{W}\right)^W-1\right]\right)}{(1+\mathrm{SNR_1}+\bar{\lambda}\mathrm{INR})\left[\left(1+\frac{\lambda \rm SNR_{side}}{W}\right)^W-1\right]+(1+\mathrm{SNR_1})(1+\bar{\lambda}\rm INR))}\right) \\
R_2&\leq C\left(\bar{\lambda}\rm SNR_2\right).
\end{aligned}
\end{equation}
% Restore the current equation number.
\setcounter{equation}{\value{MYtempeqncnt}}
%\setcounter{equation}{7}
% IEEE uses as a separator
\hrulefill
% The spacer can be tweaked to stop underfull vboxes.
\vspace*{4pt}
\end{figure*}
\begin{remark}
In the weak interference regime, BC scheme contributes to improving the downlink rate $R_1$ which is limited by the interference
from M1. While in the strong
interference regime where $\mathrm {INR\geq SNR_2}$, M1 sends all common message, and there is no interference at the receiver of M2. In this case, BC scheme
enables M1 to deliver more common message which is restricted by the interference link, thus increasing the uplink rate $R_2$.
\end{remark}
\begin{remark}
In~\cite{Zchannel}, the authors adopted a similar approach for Z-interference channel with a digital relay link which is error-free between the receivers. Our system
model and the system model in \cite{Zchannel} will reduce to the Gaussian Z-interference channel when there is no side-channel between the interfered transceiver or there is no relaying link
between the two receivers.
However, in our system model, the side-channel is a wireless channel which exists between the two mobile nodes. This is motivated by the prevalence
of multi-radio interfaces on current mobile devices.
%For instance, most smartphones today can access to both ISM band such as WiFi as well as cellular band.
Also, we consider the power allocation between main-channel and side-channel for the proposed scheme to meet the per-node power constraint.
In contrast, in \cite{Zchannel} there is no such power constraint since the relay schemes are employed between the receivers over a noiseless relay link.
\end{remark}
BC scheme adopts Han-Kobayashi strategy which allows arbitrary splits of total transmit power between common message and private message
in the weak interference regime
in addition to the power split between the side-channel and main-channel.
Therefore, the optimization among such myriads of possibilities is hard in general.
We can simplify BC scheme by setting power splitting of private message
of M1 at the level of Gaussian noise at the receiver M2 in the weak interference regime.
%Namely, set the power split of the private message $S_{22}$ such that $\mathrm{INR}_p=\beta\bar{\lambda}\rm INR=1$.
This particular simple common-private power splitting is used in \cite{D} and
is shown to achieve to within one bit of the capacity region of the general two-user interference channel.
We will use the simplified BC scheme with fixed power splitting for the capacity analysis in Section~\ref{sec:one-bit}.
\subsection{Three Simpler Schemes \label{sec:simpler}}
We propose three simpler schemes compared to BC scheme for encoding $X_3$ and sequentially post-processing $\left(Y_1,Y_3\right)$ at receiver M2.
The first is compress-and-cancel, which simplifies the transmitter design of Node~M1. Then we give another two schemes: decode-and-cancel and estimate-and-cancel,
both of which not only simplify the transmitter design of M1,
but also the receiver design of M2.
\subsubsection{Compress-and-cancel}
In compress-and-cancel scheme, the transmitter design of M1 is simplified compared to BC scheme. Node~M1 sends a compressed version of
the interfering message over the side-channel using source coding.
Noticing that the correlated information with the interfering message can be observed at receiver M2,
Wyner-Ziv strategy~\cite{wyner} can be adopted for compression. We first compress $X_2$ into $\widehat{X_2}$, then
encode $\widehat{X_2}$ as $X_3$ and transmit it over the side-channel.
At the receiver M2, with the knowledge of the distribution of $Y_1$,  we can recover the interference
under certain distortion and subtract it from the main-channel. The capacity of the side-channel should allow reliable transmission
of the compressed signal $\widehat{X_2}$.
%\begin{figure}[h!]
%  \centering
%%\flushleft
%%\flushright
%    \includegraphics[width=0.6\textwidth,trim = 40mm
%      85mm 45mm 70mm, clip]{CC-diagram.pdf}
%  \caption{Diagram for compress-and-cancel.}
%\label{CC-diagram.}
%\end{figure}
%\textcolor{red}{Jingwen, this section is written differently from the above two. You kind of derive the rate region and then state the result, unlike the discussion for other two schemes above.}
%For compress-and-cancel scheme, the achievable rate region is given by
%%\begin{small}
%\begin{gather}
%\begin{aligned}
%R_1\leq & I(X_1;Y_1,\widehat{X_2})\\
%R_2\leq&I(X_2;Y_2)\\
%\text{S.t.}~~&I(X_2;\widehat{X_2}|Y_1)\leq I(X_3;Y_3),
%\end{aligned}
%\end{gather}
%%\end{small}
%for some $p(x_1)p(x_2)p(x_3)p(\widehat{x_2}|x_2,y_1)$.

In the Gaussian channels, all inputs are assumed to be i.i.d. Gaussian distributed, satisfying $X_1\sim\mathcal{N}(0,P_1), X_2\sim\mathcal{N}(0,\bar{\lambda}P_2)$
and $X_{3}\sim\mathcal{N}(0,\lambda P_2)$, respectively, where $\lambda\in[0,1],\bar{\lambda}+\lambda=1$.
For the ease of analysis, we also assume $\widehat{X_2}$ to be a Gaussian quantized version of $X_2$:
%\begin{small}
\begin{gather}
\begin{aligned}
\widehat{X_2}=X_2+Z_q,
\end{aligned}
\end{gather}
%\end{small}
where $Z_q$ is the quantization noise with Gaussian distribution $\mathcal{N}(0,\sigma_q^2)$.
We can obtain the following inner bound, with calculation of rate in Appendix~\ref{Append2}.
%\begin{figure*}[!t]
%\begin{gather}
%\begin{aligned}
%R_1&\leq C\left(\frac{\mathrm{SNR_1}\left(1+\mathrm{SNR_1}+(1+\mathrm{SNR_1}+\bar{\lambda}\mathrm{INR})\left[\left(1+\frac{\lambda \rm SNR_{side}}{W}\right)^W-1\right]\right)}{(1+\mathrm{SNR_1}+\bar{\lambda}\mathrm{INR})\left[\left(1+\frac{\lambda \rm SNR_{side}}{W}\right)^W-1\right]+(1+\mathrm{SNR_1})(1+\bar{\lambda}\rm INR))}\right)\\
%R_2&\leq C\left(\bar{\lambda}\rm SNR_2\right). \label{ccrate}
%\end{aligned}
%\end{gather}\hrulefill
%\end{figure*}
%We can express $R_1$ and $R_2$ as
%\begin{gather}
%\begin{aligned}
%R_1&\leq C\left(\frac{\gamma_1P_1(P_2-P_3+\sigma_q^2)}{\sigma^2(P_2-P_3)+\gamma_{21}(P_2-P_3)\sigma_q^2+\sigma_q^2}\right)\\
%R_2&\leq C\left(\frac{\gamma_2(P_2-P_3)}{\sigma^2}\right).
%\end{aligned}
%\end{gather}
%%\end{small}
%where $\sigma_q^2\geq\frac{(P_2-P_3)(\gamma_1P_1+\sigma^2)}{(\gamma_1P_1+\gamma_{21}(P_2-P_3)+\sigma^2)\left[(1+\frac{\gamma_3P_3}{W\sigma^2})^W-1\right]}.$
\addtocounter{equation}{2}
\begin{Proposition}For Gaussian side-channel assisted three-node full-duplex network,the achievable rate region for compress-and-cancel is given in (\ref{ccrate}) at the top of the page.
\end{Proposition}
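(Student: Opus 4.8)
The plan is to establish the achievability in two essentially separate parts for any fixed power split $\lambda\in[0,1]$: the uplink bound $R_2$, which reduces to a point-to-point argument, and the downlink bound $R_1$, which couples a Wyner--Ziv description sent over the side-channel with an interference-cancellation decoding step at Node~M2. The bound $R_2\le C(\bar{\lambda}\mathrm{SNR_2})$ is immediate: Node~M1 carries its entire message in $X_2\sim\mathcal{N}(0,\bar{\lambda}P_2)$ over the clean link $Y_2=\gamma_2X_2+Z_2$, and Node~BS is the only node that must decode that message, so the Gaussian point-to-point achievability applies verbatim. All the work is in $R_1$.

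For $R_1$, the idea is that Node~M1 takes its realized codeword $X_2^n$, compresses it through the Gaussian test channel $\widehat{X_2}=X_2+Z_q$ with $Z_q\sim\mathcal{N}(0,\sigma_q^2)$ independent of everything else, and ships the index over the clean side-channel $Y_3=\gamma^s_{21}X_3+Z_3$ with $X_3\sim\mathcal{N}(0,\lambda P_2)$ (with the usual block-Markov scheduling, the description of one block being sent while the next block is transmitted). Since $\widehat{X_2}-X_2-Y_1$ is a Markov chain, Wyner--Ziv coding \cite{wyner} with decoder side information $Y_1$ at M2 requires a description rate $W_m\log\!\big(\frac{D_0+\sigma_q^2}{\sigma_q^2}\big)$, where $D_0=\mathrm{Var}(X_2\mid Y_1)=\bar{\lambda}P_2\,\frac{1+\mathrm{SNR_1}}{1+\mathrm{SNR_1}+\bar{\lambda}\mathrm{INR}}$ is a one-line linear-MMSE computation from $Y_1=\gamma_1X_1+\gamma^m_{21}X_2+Z_1$. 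The side-channel is a clean AWGN link whose capacity, in the bandwidth normalization used throughout, is $WC(\lambda\mathrm{SNR_{side}}/W)=W_m\log\!\big[(1+\lambda\mathrm{SNR_{side}}/W)^W\big]$; writing $A:=(1+\lambda\mathrm{SNR_{side}}/W)^W$, the index is delivered reliably if and only if $(D_0+\sigma_q^2)/\sigma_q^2\le A$, equivalently $\sigma_q^2\ge D_0/(A-1)$.

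Given $\widehat{X_2}^n$, Node~M2 decodes $X_1$ from $(Y_1,\widehat{X_2})$ treating the residual interference as noise; since $X_1\perp\widehat{X_2}$ the achievable rate is $I(X_1;Y_1,\widehat{X_2})=I(X_1;Y_1\mid\widehat{X_2})$, and with $\mathrm{Var}(X_2\mid\widehat{X_2})=\bar{\lambda}P_2\,\sigma_q^2/(\bar{\lambda}P_2+\sigma_q^2)$ this equals $C\big(\mathrm{SNR_1}/(1+\bar{\lambda}\mathrm{INR}\cdot\sigma_q^2/(\bar{\lambda}P_2+\sigma_q^2))\big)$. This is decreasing in $\sigma_q^2$, so the best choice makes the Wyner--Ziv constraint tight, $\sigma_q^2=D_0/(A-1)$; substituting $D_0$ and clearing the nested fractions turns the effective interference-plus-noise factor into $[(A-1)(1+\mathrm{SNR_1}+\bar{\lambda}\mathrm{INR})+(1+\mathrm{SNR_1})(1+\bar{\lambda}\mathrm{INR})]\big/[(A-1)(1+\mathrm{SNR_1}+\bar{\lambda}\mathrm{INR})+(1+\mathrm{SNR_1})]$, and replacing $A-1$ by $(1+\lambda\mathrm{SNR_{side}}/W)^W-1$ reproduces exactly the right-hand side of~(\ref{ccrate}).

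The only genuinely fiddly part is this last algebraic consolidation, keeping the $\mathrm{SNR}$-normalizations consistent throughout; two minor points also deserve a line each: that it is optimal to make the side-channel rate constraint tight, which is immediate from the monotonicity of $R_1$ in $\sigma_q^2$, and the side-channel bandwidth and power bookkeeping, including the degenerate case $W=0$, where $A\to1$ so that $D_0/(A-1)\to\infty$ and~(\ref{ccrate}) collapses to $R_1\le C(\mathrm{SNR_1}/(1+\bar{\lambda}\mathrm{INR}))$, consistent with the footnote convention and with simply treating the interference as noise after the power split.
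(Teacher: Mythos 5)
Your proof is correct and follows essentially the same route as the paper: Wyner--Ziv compression of $X_2$ with Gaussian test channel $\widehat{X_2}=X_2+Z_q$, the side-channel capacity constraint giving $\sigma_q^2\ge D_0/(A-1)$, and then $R_1\le I(X_1;Y_1,\widehat{X_2})$ evaluated at the tight $\sigma_q^2$; your use of $D_0=\mathrm{Var}(X_2\mid Y_1)$ is simply a cleaner repackaging of the paper's direct computation of $h(\widehat{X_2}|Y_1)-h(\widehat{X_2}|Y_1,X_2)$. The algebra consolidating the nested fractions matches the right-hand side of~(\ref{ccrate}) exactly.
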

\subsubsection{Decode-and-cancel}
Decode-and-cancel can lead to a simpler transceiver design for M1 compared to BC scheme because no common-private message splitting is adopted at
the transmitter M1, and only single-user decoders are involved at the receiver M2.
In DC scheme \cite{Jingwen}, the interfering message $S_2$ of Node~M1 is encoded into two codewords $X_2$ and $X_3$ using two independent Gaussian codebooks. Then $X_2$ is sent through the main-channel and $X_3$ is sent through the side-channel. The message $S_2$ is required to be decoded at both the intended receiver BS and the side-channel receiver at M2. After decoding $S_2$ from the side-channel,
the interference can be cancelled out at the main-channel receiver at M2.

In the Gaussian channels, all inputs are assumed to be i.i.d. Gaussian distributed, satisfying $X_1\sim\mathcal{N}(0,P_1), X_2\sim\mathcal{N}(0,\bar{\lambda}P_2)$
and $X_{3}\sim\mathcal{N}(0,\lambda P_2)$, respectively, where $\lambda\in[0,1],\bar{\lambda}+\lambda=1$.
The parameter $\lambda$ denotes the power allocated to the side-channel. We obtain the following inner bound achieved by DC.
\begin{Proposition}(from \cite{Jingwen}) For Gaussian side-channel assisted three-node full-duplex network, the following rate region is achievable for decode-and-cancel
%\begin{small}
\setcounter{equation}{7}
\begin{align}
 R_{1}& \leq C_1\nonumber \\
R_2&\leq\min \left\{\!W C\left(\!\frac{\lambda \rm SNR_{side}}{W}\right),C\left(\bar{\lambda}\rm SNR_2\right)\!\right\}
\end{align}
%\end{small}
\end{Proposition}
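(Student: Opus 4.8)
The plan is a standard random-coding achievability argument specialized to the Z-like channel with an orthogonal side-channel. Since decode-and-cancel uses no common--private splitting, only three independent Gaussian codebooks are needed and no Fourier--Motzkin elimination is required. Fix $\lambda\in[0,1]$. Over a block of $n$ main-channel symbols, corresponding to $n_s=\lceil Wn\rceil$ side-channel symbols (the rounding is asymptotically immaterial), generate $2^{nR_1}$ codewords $X_1^n(s_1)$ with i.i.d.\ $\mathcal N(0,P_1)$ components for BS, and, indexed by the single uplink message $s_2\in[1{:}2^{nR_2}]$, a main-channel codeword $X_2^n(s_2)$ with i.i.d.\ $\mathcal N(0,\bar\lambda P_2)$ components together with an independently drawn side-channel codeword $X_3^{n_s}(s_2)$ with i.i.d.\ $\mathcal N(0,\lambda P_2/W)$ components. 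M1 thus allocates power $\bar\lambda P_2$ to the main-channel and $\lambda P_2$ (spread over bandwidth $W_s$) to the side-channel, so the per-node constraint $\mathbb E|X_2+X_3|^2\le P_2$ holds and the effective per-symbol SNR on the side-channel is $\lambda\mathrm{SNR_{side}}/W$.

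Encoding is immediate: BS sends $X_1^n(S_1)$; M1 sends $X_2^n(S_2)$ on the main-channel and $X_3^{n_s}(S_2)$ on the side-channel. Decoding proceeds in three steps, each analyzed by the usual joint-typicality (or ML) argument with a union bound over incorrect codewords. (i) BS recovers $S_2$ from $Y_2^n=\gamma_2X_2^n+Z_2^n$, a point-to-point Gaussian channel, with vanishing error probability whenever $R_2<C(\bar\lambda\mathrm{SNR_2})$. (ii) M2 recovers $S_2$ from $Y_3^{n_s}=\gamma^s_{21}X_3^{n_s}+Z_3^{n_s}$; the side-channel is a bandwidth-$W_s$ AWGN link on which M1 spends power $\lambda P_2$, so its capacity is $W\,C(\lambda\mathrm{SNR_{side}}/W)$ and decoding succeeds with vanishing error whenever $R_2<W\,C(\lambda\mathrm{SNR_{side}}/W)$. (iii) On the event $\{\hat S_2=S_2\}$, M2 cancels the interference by forming $\widetilde Y_1^n=Y_1^n-\gamma^m_{21}X_2^n(\hat S_2)=\gamma_1X_1^n+Z_1^n$, a clean point-to-point Gaussian channel, and recovers $S_1$ with vanishing error whenever $R_1<C(\mathrm{SNR_1})=C_1$.

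A union bound over the three error events shows that every rate pair with $R_1<C_1$ and $R_2<\min\{W\,C(\lambda\mathrm{SNR_{side}}/W),\,C(\bar\lambda\mathrm{SNR_2})\}$ is achievable; taking the closure, and the union over $\lambda\in[0,1]$, yields the region claimed in the proposition. The degenerate case $W=0$ is covered by the convention $W\,C(x/W)\triangleq0$, under which step~(ii) forces $R_2=0$, as it must since M2 then observes nothing of $X_2$. Note that the two constraints on $R_2$ arise not from rate splitting but from the single message $S_2$ having to reach BS over the main-channel \emph{and} M2 over the side-channel through independently generated codebooks, which is why a $\min$ rather than a sum appears. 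The argument is essentially routine; the only points that require care are the bandwidth bookkeeping in step~(ii) --- correctly matching the $n_s\approx Wn$ block length and the per-symbol power $\lambda P_2/W$ to the expression $W\,C(\lambda\mathrm{SNR_{side}}/W)$ and verifying the total-power accounting --- and the propagation of the side-channel decoding error into step~(iii), which is handled by conditioning the analysis of step~(iii) on the success of step~(ii). I expect this bandwidth/power normalization to be the main, though still minor, obstacle.
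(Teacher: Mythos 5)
Your proposal is correct and matches the scheme the paper describes for decode-and-cancel (the paper itself defers the full proof to reference \cite{Jingwen} and only sketches the encoding/decoding): a single uplink message $S_2$ encoded with two independent Gaussian codebooks, one on each band, decoded at BS from the main-channel and at M2 from the side-channel, after which M2 subtracts the interference and sees a clean point-to-point link to BS; this gives exactly the three single-user constraints $R_1<C_1$, $R_2<C(\bar\lambda\mathrm{SNR_2})$, and $R_2<W\,C(\lambda\mathrm{SNR_{side}}/W)$. The only cosmetic difference is in the bandwidth bookkeeping: you place the factor $W$ in the per-symbol signal variance ($\lambda P_2/W$ with unit noise variance), whereas the paper writes $X_3\sim\mathcal N(0,\lambda P_2)$ and absorbs the $W$ into the side-channel noise power $W\sigma^2$ (cf.\ Appendix~B), both conventions yielding the same per-symbol SNR $\lambda\mathrm{SNR_{side}}/W$ and hence the same rate.
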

\subsubsection{Estimate-and-cancel}
We also propose an estimate-and-cancel scheme which is even simpler than DC scheme.
In estimate-and-cancel, M1 sends a scaled version of the waveform of the interfering signal over the side-channel.
At the receiver M2, we first estimate the interfering signal from the side-channel received signal $Y_3$ , then
rescale it according to $X_2$ and cancel it out from $Y_1$.
Finally decode the signal-of-interest at the main-channel receiver at M2 using single-user decoder.
Note that in EC scheme, we have less flexibility in using the side-channel since the bandwidth of the side-channel is required to be
equal or larger than that of the main-channel.
Let $X_3=K X_2$, where $K$ is a scalar~($K\geq0$) satisfying the per-node power constraint, i.e., $\mathbb{E}(|X_2+X_3|^2)\leq P_2$.
Thus $\mathbb{E}(|X_2|^2)\leq \frac{P_2}{(1+K)^2}$.

We adopt independently Gaussian codebooks for $X_1$ and $X_2$ with $X_1\sim\mathcal{N}(0,P_1), X_2\sim\mathcal{N}\left(0,\frac{P_2}{(1+K)^2}\right)$.
Side-channel signal $X_3$ is correlated with $X_2$ such that $X_{3}\sim\mathcal{N}\left(0,\frac{K^2P_2}{(1+K)^2}\right)$, where $K\geq 0$.
The resulting rate of EC is given as follows, with calculation of rate shown in Appendix~\ref{Append2}.
%\textcolor{red}{Jingwen, why proposition instead of theorem? Where is the proof?}
%\textcolor{blue}{Ashu, I think if the achievable rate result of each scheme is called theorem, there may be too many theorems. But among all of them, the
%result of bin-and-cancel, in my opinion, is more important, or can we just call the result of BC as theorem, and rest proposition}
\begin{Proposition} \label {ECrates}For Gaussian side-channel assisted three-node full-duplex network, the following rate region is achievable for estimate-and-cancel
%\begin{small}
\begin{gather}
\begin{aligned}
R_1\leq&C\left(\frac{\mathrm {SNR_1}\left(1+\frac{K^2 \rm SNR_{side}}{(1+K)^2}\right)}{1+\frac{K^2 \mathrm{SNR_{side}}}{(1+K)^2}+\frac{\rm INR}{(1+K)^2}}\right)\\
R_2\leq&C\left(\frac{\rm SNR_2}{(1+K)^2}\right),
\end{aligned}
\end{gather}
%\end{small}
where $K\geq0$.
\end{Proposition}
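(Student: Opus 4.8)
The plan is to bound $R_2$ and $R_1$ separately, since in estimate-and-cancel the uplink message $S_2$ and the downlink message $S_1$ are decoded at different receivers and interact only through the power-split parameter $K$. For $R_2$: the uplink message $S_2$ reaches BS only through $X_2$ on the main-channel, because the side-channel copy $X_3 = KX_2$ is not observed by BS. Hence BS sees the point-to-point Gaussian channel $Y_2 = \gamma_2 X_2 + Z_2$ with input power $\mathbb{E}(|X_2|^2) = \tfrac{P_2}{(1+K)^2}$, which is exactly what the per-node constraint $\mathbb{E}(|X_2+X_3|^2) = (1+K)^2\mathbb{E}(|X_2|^2)\le P_2$ (using $K\ge 0$) forces. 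A standard random Gaussian codebook for $X_2$ with a single-user decoder then yields $R_2\le I(X_2;Y_2) = C\!\left(\tfrac{\mathrm{SNR_2}}{(1+K)^2}\right)$.

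For $R_1$: Node~M2 wants $S_1$, carried by $X_1$, and has available both $Y_1 = \gamma_1X_1 + \gamma^m_{21}X_2 + Z_1$ and the orthogonal-band observation $Y_3 = \gamma^s_{21}KX_2 + Z_3$, which depends only on $X_2,Z_3$ and is therefore independent of $X_1$. Operationally, M2 forms the linear MMSE estimate $\hat U = \mathbb{E}[\gamma^m_{21}X_2\mid Y_3]$ of the interference term $U = \gamma^m_{21}X_2$, subtracts it from $Y_1$, and feeds $\widetilde Y_1 = Y_1 - \hat U = \gamma_1X_1 + N$ to a single-user Gaussian decoder, where $N = (U - \hat U) + Z_1$ is zero-mean Gaussian and independent of $X_1$ (since $X_1\perp(X_2,Z_1,Z_3)$). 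By the orthogonality principle the residual is uncorrelated with $\hat U$, so $\mathrm{Var}(N) = \sigma^2 + \mathrm{Var}(U\mid Y_3)$, and evaluating the conditional variance of the jointly Gaussian pair $(U,Y_3)$ via $\mathrm{Var}(U\mid Y_3) = \mathrm{Var}(U) - \tfrac{\mathrm{Cov}(U,Y_3)^2}{\mathrm{Var}(Y_3)}$ and simplifying gives $\mathrm{Var}(N) = \sigma^2\,\tfrac{1 + \frac{K^2\mathrm{SNR_{side}}}{(1+K)^2} + \frac{\mathrm{INR}}{(1+K)^2}}{1 + \frac{K^2\mathrm{SNR_{side}}}{(1+K)^2}}$, whence $R_1\le C\!\left(\tfrac{|\gamma_1|^2P_1}{\mathrm{Var}(N)}\right)$ is precisely the expression claimed in the Proposition. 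Equivalently one may argue information-theoretically that $R_1\le I(X_1;Y_1,Y_3) = I(X_1;Y_1\mid Y_3)$ and compute this for Gaussian inputs; the two routes coincide because MMSE estimation is linear in the Gaussian case. The routine algebra is deferred to Appendix~\ref{Append2}.

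The calculation itself is elementary; the points that need care are (i) the power bookkeeping — since $X_3 = KX_2$ is a scaled copy rather than an independent codeword, it is the \emph{sum} power $\mathbb{E}(|X_2+X_3|^2)$ that the constraint controls, which is what puts the factor $(1+K)^{-2}$ on $\mathbb{E}(|X_2|^2)$ and hence on both SNR terms; and (ii) justifying that the operational estimate--subtract--decode receiver really attains $I(X_1;Y_1\mid Y_3)$ with no loss, which rests on joint Gaussianity making the MMSE estimate linear and its residual Gaussian and independent of $X_1$, so that treating $N$ as additive Gaussian noise in a single-user decoder is optimal. I would also note that, unlike the other three schemes, estimate-and-cancel sends the analog waveform $X_3$ over the side-channel and therefore implicitly needs $W\ge 1$; this constrains the scheme's applicability but does not enter the rate formula, which contains no $W$. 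The standard error-probability analysis (random Gaussian codebook for $X_1$, joint-typicality or nearest-neighbour decoding) is omitted as routine.
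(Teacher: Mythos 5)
Your proposal is correct and takes essentially the same route as the paper. The paper's Appendix~B simply evaluates $R_1\le I(X_1;Y_1,Y_3)=h(Y_1,Y_3)-h(Y_1,Y_3|X_1)$ and $R_2\le I(X_2;Y_2)$ for the stated Gaussian inputs with the $(1+K)^{-2}$ power split; your MMSE estimate--subtract--decode receiver is just the operational realization of the same quantity, and you correctly note that $I(X_1;Y_1,Y_3)=I(X_1;Y_1\mid Y_3)$ (since $X_1\perp Y_3$) coincides with the post-subtraction Gaussian single-user rate by joint Gaussianity. Your conditional-variance algebra checks out, and your side remarks on the $(1+K)^{-2}$ power bookkeeping and the implicit $W\ge 1$ requirement both match the paper.
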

\subsection{New Outer Bound \label{sec:outer}}
In this section, we derive a new outer bound to characterize the performance of the four schemes discussed in Sections~\ref{BCscheme} and~\ref{sec:simpler}.
When there is no INI in the side-channel assisted three-node full-duplex network, we can obtain the no-interference outer bound,
\begin{gather}
\begin{aligned}
  R_{1}&\leq C_1\\
    R_{2}&\leq C_2.\label{Noouter}
\end{aligned}
\end{gather}

However, the no-interference outer bound can be arbitrarily loose.
We notice that one of the transmitter and receiver pairs in Figure~\ref{fig.10} is co-located at the base station, thus causal information of M1 is available
at the transmitter BS.
Hence we derive a new outer bound on the capacity region of Gaussian side-channel assisted three-node network as follows.
\begin{theorem}\label{sideub}
For Gaussian side-channel assisted three-node full-duplex network under per-node power constraint, the capacity region is outer bounded by
\begin{equation}
\begin{aligned}
  R_{1}\leq& C_1\\
    R_2\leq& C\left(\bar{\lambda}\rm SNR_2\right) \nonumber
\end{aligned}
\end{equation}
\begin{equation}
\begin{aligned}
R_1+R_2\leq & C\left(\frac{\bar{\lambda}\mathrm{SNR_2}}{1+\bar{\lambda}\mathrm{INR}}\right)+WC\left(\frac{\lambda\mathrm{SNR_{side}}}{W}\right)\\
&+C\left(\mathrm{SNR_1}+\bar{\lambda}\mathrm{INR}+2\sqrt{\bar{\lambda}\mathrm{SNR_1 INR}}\right), \label{outerstrong}
\end{aligned}
\end{equation}
where $\lambda\in[0,1], \lambda+\bar{\lambda}=1$.
%When INR is greater than a threshold \textcolor{red}{Jingwen, this is imprecise. What threshold?}, the capacity region is just constrained by the point-to-point capacity without being impaired by the interference.
\end{theorem}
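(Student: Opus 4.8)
The plan is a genie-aided converse in the style of Etkin--Tse--Wang, adapted to the one-sided (``Z-like'') topology of Figure~\ref{fig.10} and to the co-location of the BS transmitter and receiver, which lets $X_1^i$ be an arbitrary function of $(S_1,Y_2^{i-1})$. Let $S_1,S_2$ be independent and uniform, inducing codewords $X_1^n,X_2^n,X_3^n$ with $X_2^n,X_3^n$ depending on $S_2$ only, and let $Z_i^n$ be the noise sequences. I would fix $\bar\lambda\in[0,1]$ by the empirical power allocation of Node~M1, i.e. $\tfrac1n\sum_i\mathbb{E}[X_{2,i}^2]\le\bar\lambda P_2$ and $\tfrac1n\sum_i\mathbb{E}[X_{3,i}^2]\le\lambda P_2$ with $\lambda+\bar\lambda=1$, so that the region in the statement, evaluated at this $\lambda$, must contain the code's rate pair; the Fano penalties $\epsilon_n\to0$ as $n\to\infty$.

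The two single-rate bounds come first and are routine. Since $Y_3^n$ is a function of $(S_2,Z_3^n)$, one has $I(S_1;Y_1^n,Y_3^n)\le I(S_1;Y_1^n\mid S_2)$; conditioning on $S_2$ strips the interference from $Y_1^n$, and since $h(\gamma_1X_1^n+Z_1^n\mid S_1,S_2)\ge h(Z_1^n)$ the feedback is harmless here, so the maximum-entropy inequality under $\mathbb{E}[X_1^2]\le P_1$ gives $R_1\le C_1$. For $R_2$, Node~BS observes only $Y_2^n=\gamma_2X_2^n+Z_2^n$, so $nR_2\le I(S_2;Y_2^n\mid S_1)+n\epsilon_n\le h(Y_2^n)-h(Z_2^n)+n\epsilon_n$, and single-letterizing with max-entropy and Jensen (applied to the M1 power split) yields $R_2\le C(\bar\lambda\,\mathrm{SNR}_2)$.

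For the sum rate I would use two genies. First, hand Node~BS the ``interference-plus-noise'' signal $W_1^n:=\gamma^m_{21}X_2^n+Z_1^n$ that corrupts M2; since a genie only helps, $nR_2\le I(S_2;Y_2^n,W_1^n\mid S_1)+n\epsilon_n$. Second, hand Node~M2 the signal $Y_2^n$ (and, if needed, past channel observations), so that $nR_1\le I(S_1;Y_1^n,Y_2^n,Y_3^n)+n\epsilon_n$ and, crucially, $X_1^n$ becomes deterministic given the conditioning, taming the causal cooperation. Because $Y_1^n=\gamma_1X_1^n+W_1^n$, the differential entropy of the residual interference uncertainty (an entropy of $W_1^n$) enters the two Fano bounds with opposite signs and should be arranged to cancel when they are added; what then remains splits into three pieces. (b) The side-channel observation $Y_3^n$ contributes at most the block point-to-point capacity of the side-channel, $nW\,C\!\big(\tfrac{\lambda\,\mathrm{SNR}_{\mathrm{side}}}{W}\big)$, after accounting for its $W$-fold bandwidth and per-symbol SNR $\lambda\,\mathrm{SNR}_{\mathrm{side}}/W$. (c) A term $h(Y_1^n)-h(Z_1^n)$ which, bounding $\mathbb{E}[(\gamma_1X_1+\gamma^m_{21}X_2)^2]$ by Cauchy--Schwarz under the worst-case, fully coherent $X_1$--$X_2$ correlation---this is exactly where the co-location is paid for---is at most $nC\!\big(\mathrm{SNR}_1+\bar\lambda\,\mathrm{INR}+2\sqrt{\bar\lambda\,\mathrm{SNR}_1\,\mathrm{INR}}\big)$. (a) A term $h(Y_2^n\mid S_1,W_1^n)-h(Z_2^n)$: once the interference $\gamma^m_{21}X_2+Z_1$ is available, the conditional variance of $X_2$ is at most $\tfrac{\sigma^2}{\sigma^2+\bar\lambda|\gamma^m_{21}|^2P_2}$ times its own (Gaussian being the worst input), so by max-entropy and Jensen (the map $p\mapsto\log(1+\tfrac{\gamma_2^2p/\sigma^2}{1+|\gamma^m_{21}|^2p/\sigma^2})$ being concave) this is at most $nC\!\big(\tfrac{\bar\lambda\,\mathrm{SNR}_2}{1+\bar\lambda\,\mathrm{INR}}\big)$. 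Summing (a)+(b)+(c), dividing by $n$, and letting $n\to\infty$ gives the asserted sum-rate inequality.

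I expect the main obstacle to be the interaction of the genie cancellation with the BS feedback: one must choose the chain-rule expansions of $I(S_1;Y_1^n,Y_2^n,Y_3^n)$ and $I(S_2;Y_2^n,W_1^n\mid S_1)$ so that the $W_1^n$- and noise-entropies cancel \emph{exactly} while the conditioning from the second genie still resolves the dependence of $X_1^n$ on $Y_2^{i-1}$---in effect a careful, causality-respecting accounting (possibly a dependence-balance-type step) is needed so that no residual $I(W_1^n;Y_2^n,Y_3^n)$-type cross term survives and inflates the bound, and so that the leftover $h(Y_2^n\mid S_1,W_1^n)$ produces the \emph{attenuated} SNR of term (a) rather than the looser $\bar\lambda\,\mathrm{SNR}_2$. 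A secondary difficulty is the single-letterization of the coherent-combining term: one must maximize the per-symbol conditional entropies over the joint second moments of $(X_1,X_2)$ (the optimal cross-correlation being, in general, interior rather than $\pm1$) and then invoke Jensen on $\tfrac1n\sum_i\mathbb{E}[X_{2,i}^2]\le\bar\lambda P_2$, which requires the concavity in $p$ of $\log\!\big(1+(\sqrt{\mathrm{SNR}_1}+\sqrt{|\gamma^m_{21}|^2p/\sigma^2})^2\big)$---mildly delicate because of the square root. The bandwidth normalization producing the $W\,C(\lambda\,\mathrm{SNR}_{\mathrm{side}}/W)$ term (relating the $n$ main-channel symbols to the $\approx Wn$ side-channel symbols) is routine but should be written out, consistent with the convention that this term vanishes at $W=0$.
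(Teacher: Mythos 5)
Your high-level architecture matches the paper's: both use a genie $V_2 = \gamma^m_{21}X_2 + Z_1$ (your $W_1$), peel off the side-channel contribution via $I(\cdot;Y_3^n\mid\cdot)\le I(X_3^n;Y_3^n)\le nWC(\lambda\,\mathrm{SNR_{side}}/W)$, obtain the coherent-combining term from $h(Y_1^n)$ with the worst-case $X_1$--$X_2$ cross-correlation, and obtain the attenuated $C(\bar\lambda\,\mathrm{SNR_2}/(1+\bar\lambda\,\mathrm{INR}))$ from $h(Y_{2}\mid V_{2})$. The single-rate bounds are fine. But the central step is where you have a genuine gap. The paper never gives $Y_2^n$ to Node~M2. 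It works directly with $I(S_1;Y_1^n,Y_3^n)+I(S_2;Y_2^n\mid S_1)$, reduces the non-side-channel part to $h(Y_1^n) + U$ with $U = h(Y_2^n\mid S_1) - h(Y_2^n\mid S_1,S_2) - h(Y_1^n\mid S_1)$, and then invokes a specific lemma from \cite{achal} (Appendix~C of that reference, a result on the Gaussian interference channel with output feedback) which asserts $U \le \sum_i h(Y_{2i}\mid X_{1i},V_{2i}) - \sum_i\bigl(h(Z_{1i})+h(Z_{2i})\bigr)$. That inequality is exactly the causality-respecting single-letterization you flag as ``the main obstacle,'' and the paper's proof lives or dies by it.

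Your proposed workaround---hand $Y_2^n$ to M2 so that $X_1^n$ becomes deterministic given $(S_1,Y_2^n)$, and hand $W_1^n$ to BS---does not make the cancellation exact; it makes it worse. With the second genie, the $R_1$ bound produces $-h(W_1^n\mid Y_2^n,Y_3^n)$, while the genie-augmented $R_2$ bound produces $+h(W_1^n)$, and the sum retains a residual $I(W_1^n;Y_2^n,Y_3^n)\ge 0$ that has no place in the claimed bound. (Dropping the $Y_2^n$ genie and keeping $I(S_1;Y_1^n,Y_3^n)$ reintroduces the feedback obstruction you were trying to remove: one can only show $h(Y_1^n\mid S_1)\ge h(W_1^n\mid Y_2^{n-1})$, again leaving an $I(W_1^n;Y_2^{n-1})$-type surplus.) You correctly anticipate that ``a careful, causality-respecting accounting (possibly a dependence-balance-type step)'' is needed, but the proposal does not supply it, and your chosen genie placement actively manufactures the cross term. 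To complete the argument along these lines you would need to prove the \cite{achal}-type lemma (or an equivalent chain-rule identity that couples $h(Y_2^n\mid S_1)$ and $h(Y_1^n\mid S_1)$ per-symbol, conditioning on $X_{1i}$ and $V_{2i}$ so that the feedback-induced dependence of $X_{1i}$ on $Y_2^{i-1}$ is absorbed rather than discarded); as written, the proof has a missing key lemma at its crux.
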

\begin{proof}
The first two bounds on $R_1$ and $R_2$ come from the point-to-point capacity of AWGN channel.
The third bound on $R_1+R_2$ is the sum-capacity upper bound. Messages $S_1$ and $S_2$ are chosen uniformly and independently at random,
and $X_{1i}$ is a function of $(S_1,Y_2^{i-1})$ for $i\in[1,n]$.  We define a genie $V_2=\gamma^m_{21}X_2+Z_1$. The sum-capacity upper bound is derived by providing the genie $V^n_{\mathrm{2}}$ to the base station. 
From Fano's inequality, we have $H(S_i|Y_i{}^n)\leq \epsilon_{n,i}$ for $i=1,2$, where $n$ goes
to infinity as $\epsilon_{n,i}$ goes to zero. Let $\epsilon_{n}=\epsilon_{n,1}+\epsilon_{n,2}$. Thus for any codebook of block length $n$, we have
\begin{eqnarray}
&&n(R_1+R_2-\epsilon_n) \\
&\leq &I(S_1;Y_1^n,Y_3^n)+I(S_2;Y_2^n|S_1) \\
&=&I(S_1;Y_1^n)+I(S_1;Y_3^n|Y_1^n)+I(S_2;Y_2{}^n|S_1)\\
&=&I(S_1;Y_1^n)+H(Y_3^n|Y_1^n)-H(Y_3^n|Y_1^n,S_1)\nonumber\\
&&+I(S_2;Y_2{}^n|S_1) \label{aa}\\
&\leq&I(S_1;Y_1^n)+I(X_3^n;Y_3^n) +I(S_2;Y_2{}^n|S_1)\label{bb}\\
&=&h(Y_1^n)+h(Y_3^n)-h(Z_3^n)\nonumber\\
&&+\underbrace{h(Y_2^n|S_1)-h(Y_2^n|S_2,S_1)-h(Y_1^n|S_1)}_{U}\label{cc}
\end{eqnarray}
where (\ref{bb}) follows from $H(Y_3^n|Y_1^n)\leq H(Y_3^n)$ and $-H(Y_3^n|Y_1^n,S_1)\leq -H(Y_3^n|X_3^n)$.
$U$ in (\ref{cc}) can be further upper bounded as
\begin{eqnarray}
U&\leq& \sum_{i=1}^n h(Y_{2i}|X_{1i},V_{2i})-\sum_{i=1}^n\big(h(Z_{1i})+h(Z_{2i})\big) \label{d}\\
&\leq&  \sum_{i=1}^n h(Y_{2i}|V_{2i})-\sum_{i=1}^n\big(h(Z_{1i})+h(Z_{2i})\big) \label{e}
\end{eqnarray}
where (\ref{d}) follows from result in \cite{achal} (in Appendix~C); (\ref{e}) follows because removing
condition does not reduce entropy.

Combining the results above and applying the chain rule, we have
\begin{gather}
\begin{aligned}
R_1+R_2-\epsilon_n&\leq \frac{1}{n}\sum_{i=1}^n\bigg(h(Y_{1i})+h(Y_{3i})+h(Y_{2i}|V_{2i})-\big[h(Z_{1i})\\
&+h(Z_{2i})+h(Z_{3i})\big]\bigg).\nonumber
\end{aligned}
\end{gather}
Since Gaussian inputs maximize entropy, therefore the sum-rate is upper bounded when $X_1\sim \mathbf{N(0,P_1)}$,  $X_2\sim \mathbf{N(0,\bar{\lambda}P_2)}$ and $X_3\sim \mathbf{N(0,\lambda P_2)}$, where $\lambda\in[0,1], \bar{\lambda}+\lambda=1$. The correlation coefficient between the codewords $X_1$ and $X_2$ is $\rho$, where $\rho\in[-1,1].$ 
Hence the sum-capacity upper bound can be expressed as 
\begin{equation}
%\begin{small}
\begin{aligned}
R_1+R_2 &\leq \sup_{\left|\rho\right|\in[0,1]} C\left(\frac{\bar{\lambda}\mathrm {SNR_2}}{1+\bar{\lambda}\mathrm{INR}}\right)+WC\left(\frac{\lambda\mathrm{SNR_{side}}}{W}\right)\nonumber\\
&+C\left(\mathrm{SNR_1}+\bar{\lambda}\mathrm{INR}+2\left|\rho\right|\sqrt{\bar{\lambda}\mathrm{SNR_1 INR}}\right).\\
&\leq C\left(\frac{\bar{\lambda}\mathrm{SNR_2}}{1+\bar{\lambda}\mathrm{INR}}\right)+WC\left(\frac{\lambda\mathrm{SNR_{side}}}{W}\right)\\
&+C\left(\mathrm{SNR_1}+\bar{\lambda}\mathrm{INR}+2\sqrt{\bar{\lambda}\mathrm{SNR_1 INR}}\right).
\end{aligned}
%\end{small}
\end{equation}
\end{proof}

\subsection{Within One Bit of the Capacity Region \label{sec:one-bit}}
In this section, we show that BC is within one-bit of the new outer bound derived in Section~\ref{sec:outer}. Furthermore, we also characterize the capacity gap of DC and EC schemes, and show that they are also approximately optimal in specific regimes.

Let $\mathcal{R_{\rm BC}}=\bigcup_{\beta,\lambda}\mathfrak{R_{\mathrm{BC}}}(\beta,\lambda)$ denote the achievable rate region for bin-and-cancel including all possible power split at Node~M1, where
$\mathfrak{R_{\mathrm{BC}}}(\beta,\lambda)$ denote the achievable rate region for a fixed power split between common and private information of M1, as well as
side-channel and main-channel at M1.
The main result is given in the following theorem. %\textcolor{red}{Jingwen, $\mathfrak{R_{\mathrm{BC}}}$ has not been defined.}
\begin{theorem} \label{onebitbc}
The achievable region $\mathcal{R_{\rm BC}}$ is within 1 bit/s/Hz of the capacity region of Gaussian side-channel assisted three-node network, for all values of channel
parameters and bandwidth ratio.
\end{theorem}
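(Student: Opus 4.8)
The plan is to follow the ``within one bit'' template of Telatar and Tse~\cite{D} for the Gaussian interference channel, augmented by the side-channel term. The region $\mathcal{R_{\rm BC}}=\bigcup_{\beta,\lambda}\mathfrak{R_{\mathrm{BC}}}(\beta,\lambda)$ is a union of pentagons, one per power split $(\beta,\lambda)$ at M1, and the outer bound of Theorem~\ref{sideub} is likewise a family of pentagons, one per side-channel power fraction $\lambda\in[0,1]$ (the $\lambda$ there being the fraction of $P_2$ the capacity-achieving input places on the side-channel). Each such region has the simple shape $\{R_1\le c_1,\,R_2\le c_2,\,R_1+R_2\le c_{12}\}$, for which containment reduces to the three scalar inequalities $c_1\le c_1'$, $c_2\le c_2'$, $c_{12}\le c_{12}'$. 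So it suffices to fix $\lambda$, make one \emph{canonical} choice of $\beta$, and show that $\mathfrak{R_{\mathrm{BC}}}(\beta^\star(\lambda),\lambda)$ contains the outer pentagon with the \emph{same} $\lambda$ shrunk inward by one bit/s/Hz per user; since shrinking each coordinate by one shrinks the sum facet by two, this amounts to showing the achievable $R_1$- and $R_2$-facets are each within one bit of the outer ones and the achievable sum facet is within two bits. Taking the union over $\lambda$ then proves the theorem, because the true capacity region is contained in the outer bound.

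For the canonical split I would, in the weak regime $\mathrm{INR}\le\mathrm{SNR_2}$, use the choice of~\cite{D}: place the private part of M1's main-channel codeword at the noise floor of receiver M2, i.e.\ take $\beta$ with $\beta\bar\lambda\,\mathrm{INR}=\min\{1,\bar\lambda\,\mathrm{INR}\}$ (feasible since $\beta\in[0,1]$). In the strong regime $\mathrm{SNR_2}\le\mathrm{INR}\le\mathrm{SNR_2}(1+\mathrm{SNR_1})$ I would take $\beta=0$, exactly as in Proposition~\ref{BCrate}, and in the very-strong regime $\mathrm{INR}\ge\mathrm{SNR_2}(1+\mathrm{SNR_1})$ the achievable region already coincides with the no-interference outer bound, so the gap is zero. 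The structural fact that makes the comparison tractable is that the side-channel contribution $W\,C\!\left(\lambda\,\mathrm{SNR_{side}}/W\right)$ appears \emph{verbatim} in the achievable sum-rate of Proposition~\ref{BCrate} and in the sum-rate of Theorem~\ref{sideub}; using the same $\lambda$ on both sides cancels it identically, so what remains is the comparison for an ordinary Gaussian Z-interference channel.

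The three facet comparisons then go as follows. (i) For $R_1$: $C\!\left(\mathrm{SNR_1}/(1+\beta\bar\lambda\,\mathrm{INR})\right)\ge C(\mathrm{SNR_1}/2)\ge C_1-W_m$, using $\beta\bar\lambda\,\mathrm{INR}\le1$ and the elementary bound $C(x)-C(x/2)\le W_m$; in the strong and very-strong regimes the achievable $R_1$-facet equals $C_1$ exactly. (ii) For $R_2$: the first branch of the $\min$ in Proposition~\ref{BCrate} equals the outer bound $C(\bar\lambda\,\mathrm{SNR_2})$ exactly, so when $\bar\lambda\,\mathrm{INR}\le1$ (where $\beta=1$ and the second branch is vacuous) there is nothing to do; when $\bar\lambda\,\mathrm{INR}\ge1$ one has $\beta\bar\lambda\,\mathrm{INR}=1$, $\beta\bar\lambda\,\mathrm{SNR_2}=\mathrm{SNR_2}/\mathrm{INR}\ge1$ and $\bar\beta\bar\lambda\,\mathrm{INR}=\bar\lambda\,\mathrm{INR}-1$, and one lower-bounds the second branch within one bit using $C(\mathrm{SNR_2}/\mathrm{INR})\ge C(1)$, the inequality $(1+x)(1+y)\ge1+xy$, and $\bar\lambda\,\mathrm{INR}\ge1$. (iii) For the sum: after the side-channel terms cancel, one absorbs the cross-correlation term in Theorem~\ref{sideub} via $\mathrm{SNR_1}+\bar\lambda\,\mathrm{INR}+2\sqrt{\bar\lambda\,\mathrm{SNR_1 INR}}=\left(\sqrt{\mathrm{SNR_1}}+\sqrt{\bar\lambda\,\mathrm{INR}}\right)^{2}\le2\left(\mathrm{SNR_1}+\bar\lambda\,\mathrm{INR}\right)$, notes that the outer term $C\!\left(\bar\lambda\,\mathrm{SNR_2}/(1+\bar\lambda\,\mathrm{INR})\right)$ is at most the achievable term $C(\beta\bar\lambda\,\mathrm{SNR_2})$ in the weak regime (because $\beta\ge1/(1+\bar\lambda\,\mathrm{INR})$ for the canonical $\beta$) and at most $C(1)$ in the strong regime (since then $\mathrm{SNR_2}/(1+\mathrm{INR})\le1$), and bounds the argument of the remaining logarithm by $4$ times that of the achievable one, giving a loss of at most two bits.

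I expect the main obstacle to be the bookkeeping of constants in step~(iii), and specifically the intermediate weak regime $1\le\bar\lambda\,\mathrm{INR}\le\mathrm{SNR_2}$, where the substitution of the canonical split for the optimal one and the absorption of the cross-correlation term both consume nearly the full allowance, so the elementary estimates must be the tight ones and the canonical $\beta$ must be chosen exactly as above; the $R_2$-facet bound in (ii) is likewise attained with essentially no slack, which is what forces the ``private power at the noise floor'' choice. By contrast the strong regime is routine — only the $R_2$ and sum facets are active and $\beta=0$ removes the common/private accounting — and the very-strong regime is immediate. Finally I would check the degenerate case $W=0$ under the stated convention $W\,C\!\left(\lambda\,\mathrm{SNR_{side}}/W\right)\triangleq0$, where the statement must reduce to the known one-bit gap for the Gaussian Z-interference channel~\cite{D}; this is a useful consistency check on the whole computation.
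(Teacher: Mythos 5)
Your proposal is correct and follows essentially the same strategy as the paper: fix the side-channel power fraction $\lambda$, observe that the term $W\,C(\lambda\,\mathrm{SNR_{side}}/W)$ cancels identically between the achievable sum-rate and the outer sum bound, pick $\beta$ so the private power lands at M2's noise floor in the weak regime and $\beta=0$ in the strong regime, and then verify the three facet gaps $\delta_{R_1}<1$, $\delta_{R_2}<1$, $\delta_{R_1+R_2}<2$ by elementary logarithm estimates. The only (minor, and welcome) difference is that you take $\beta\bar\lambda\,\mathrm{INR}=\min\{1,\bar\lambda\,\mathrm{INR}\}$, which cleanly handles the edge case $\bar\lambda\,\mathrm{INR}<1$ (e.g., $\lambda$ near $1$) that the paper's footnote only addresses via the weaker condition $\mathrm{INR}<1$; the estimates and conclusions are otherwise the same.
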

\begin{proof}
Let $\delta_{R_1}^{\rm BC}$ denote the difference between the upper bound on $R_1$ and achievable rate $R_1$ in $\mathcal{R_{\rm BC}}$. Likewise, we have $\delta_{R_2}^{\rm BC}$ and
$\delta_{R_1+R_2}^{\rm BC}$, where all rates are divided by the total bandwidth $W_m+W_s$. For a given $\lambda$, i.e., fixing the power allocated to the side channel, we show that there exist an achievable scheme which achieves
the corresponding outer bound to within 1 bit/s/Hz. In order to prove the rate pair $(R_1-1\times (W_m+W_s), R_2-1\times (W_m+W_s))$ in $\mathcal{R_{\rm BC}}$ is achievable for any $(R_1, R_2)$ in the capacity region of Gaussian three-node full-duplex with side-channel, we just need to show that
\begin{gather}
\begin{aligned}
\delta_{R_1}^{\rm BC}<&1\\
\delta_{R_2}^{\rm BC}<&1\\
\delta_{R_1+R_2}^{\rm BC}<&2.\label{one-bit}
\end{aligned}
\end{gather}
Since the inner bound in the weak interference regime is different from that in the strong interference regime.
We will prove (\ref{one-bit}) for two different cases.
\begin{enumerate}
\item In the weak interference regime where $\rm INR\leq SNR_2$,
we use the simplified BC scheme for comparison when the power splitting of private message
of M1 is set at the noise level at the receiver M2, i.e., $\mathrm{INR}_p=
%\beta\bar{\lambda}\rm INR=1$.
\beta\bar{\lambda}\mathrm{INR}=1$, when $\mathrm{INR}\geq 1$.\footnote{If $\mathrm{INR}<1$, the interference is even weaker than the Gaussian noise, one can easily show that treating interference as noise can achieve the capacity region to within one bit.}
Thus $\mathfrak{R_{\mathrm{BC}}}(\beta=\frac{1}{\bar{\lambda}\rm INR})\subset\mathcal{R_{\rm BC}}$.
We can compute $\mathfrak{R_{\mathrm{BC}}}(\beta=\frac{1}{\bar{\lambda}\rm INR})$ directly from Proposition~\ref{BCrate}, by
comparing with the new outer bound in (\ref{outerstrong}), it is straightforward to show that
%\begin{small}
\begin{gather}
\begin{aligned}
&\!\!\delta_{R_1}^{\rm BC}\leq \frac{W_m}{W_m+W_s}\mathrm{log}\left(1+\frac{\rm SNR_1}{2+\rm SNR_1}\right)\\
&\!\!~~~~~<\frac{1}{1+W}\leq 1\\
&\!\!\delta_{R_2}^{\rm BC}\leq\frac{W_m}{W_m\!+\!W_s}\!\max\left\{\!0,1-W\mathrm{log}\left(1+\frac{\lambda \rm SNR_{side}}{W}\right)\right.\\
&\!\!~~~~~\left.+\mathrm{log}\left(\!\frac{1+\bar{\lambda}\rm SNR_2}{1\!+\!\bar{\lambda}\mathrm{INR}}\frac{\rm INR}{\rm INR+SNR_2}\!\right)\right\}\\
&\!\!~~~~~~<\frac{1}{1+W}\leq1\\
&\!\!\delta_{R_1+R_2}^{\rm BC}\leq \frac{W_m}{W_m+W_s}\bigg(2\\
&+\mathrm{log}\left(1-\frac{\rm SNR_2}{\mathrm{INR}+\mathrm{SNR_2}+\bar{\lambda}\mathrm {INR SNR_2}+\bar{\lambda}\rm INR^2 }\right)\bigg)\\
&\!\!~~~~~~~~~<\frac{2}{1+W}\leq 2.
\end{aligned}
\end{gather}
%\end{small}

\item In the strong interference regime where $\rm INR\geq SNR_2$, by substituting the outer bound in (\ref{outerstrong}) and inner bound achieved by BC in (\ref{BCstrong}), we have
\begin{gather}
\begin{aligned}
\delta_{R_1}^{\rm BC}&=0\\
\delta_{R_2}^{\rm BC}&=0\\
\delta_{R_1+R_2}^{\rm BC}&\leq\frac{W_m}{W_m+W_s}\mathrm{log}\left(1+\frac{2\sqrt{\bar{\lambda}\rm SNR_1 INR}}{1+\mathrm{ SNR_1}+\bar{\lambda}\mathrm{INR}}\right)\\
&\leq \frac{1}{1+W}\leq1.
\end{aligned}
\end{gather}
\end{enumerate}
\end{proof}
%\vspace{-3.05cm}
We also characterize the capacity gap for two simpler schemes, DC and EC, and the following theorem quantifies the conditions
where DC and EC can achieve within half bit of the capacity region, respectively.
\begin{theorem}
\begin{enumerate}
\item For decode-and-cancel, the achievable region $\mathcal{R_{\rm DC}}$ is within $\frac{1}{2}$~bits/s/Hz of the capacity region of Gaussian side-channel assisted
three-node network when $W\geq1$ and $\rm SNR_{side}\geq SNR_2$.
\item For estimate-and-cancel, the achievable region $\mathcal{R_{\rm EC}}$ is within $\frac{1}{2}$~bits/s/Hz of the capacity region of Gaussian
side-channel assisted three-node network when $W\in\mathbb{N}^+$ and $\rm SNR_{side}\geq \left(1+\frac{2}{\sqrt{2}-1}\right)(INR-2)$.
\end{enumerate}
\end{theorem}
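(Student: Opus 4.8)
Here is how I would attack this.

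The plan is to follow the template of the proof of Theorem~\ref{onebitbc}. I would fix the power-split parameter $\bar\lambda=1-\lambda$ appearing in the outer bound~(\ref{outerstrong}), choose a matching parameter for the scheme in question, bound the two per-coordinate gaps $\delta_{R_1}$ and $\delta_{R_2}$ after normalizing by the total bandwidth $W_m+W_s$, and then take the union over $\bar\lambda$. Both $\mathcal{R_{\rm DC}}$ and $\mathcal{R_{\rm EC}}$ are (unions of) rectangles with no sum-rate constraint, so it suffices to compare against the two coordinate-wise bounds $R_1\le C_1$ and $R_2\le C(\bar\lambda\,\mathrm{SNR_2})$ of Theorem~\ref{sideub}; its sum-rate line only makes the outer region smaller, so it can be dropped. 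Note that for decode-and-cancel the bound $R_1\le C_1$ is attained exactly, so $\delta_{R_1}^{\rm DC}=0$, whereas for estimate-and-cancel $\delta_{R_1}^{\rm EC}$ must be tracked.

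\emph{Decode-and-cancel.} Only $R_2$ is at issue. I would set the side-channel power fraction to $\lambda=\tfrac12$. Using $\mathrm{SNR_{side}}\ge\mathrm{SNR_2}$ together with Bernoulli's inequality $(1+x/W)^W\ge 1+x$, valid because $W\ge1$, one gets $W\,C\!\left(\tfrac{\mathrm{SNR_{side}}}{2W}\right)\ge C\!\left(\tfrac{\mathrm{SNR_2}}{2}\right)$, so the scheme attains $R_2=C\!\left(\tfrac{\mathrm{SNR_2}}{2}\right)$; the outer bound gives $R_2\le C(\bar\lambda\,\mathrm{SNR_2})\le C(\mathrm{SNR_2})$. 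Hence $\delta_{R_2}^{\rm DC}\le C(\mathrm{SNR_2})-C\!\left(\tfrac{\mathrm{SNR_2}}{2}\right)<W_m\log 2=W_m$, which after dividing by $W_m+W_s=W_m(1+W)$ is at most $1/(1+W)\le\tfrac12$. The same inequality $W\ge1$ gives $\tfrac12(W_m+W_s)\ge W_m$, the exact slack needed to absorb the gap; this settles part~1 and shows that both hypotheses are genuinely used.

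\emph{Estimate-and-cancel.} The region of Proposition~\ref{ECrates} is a rectangle; writing $t=1/(1+K)^2$ and $a=K^2/(1+K)^2=(1-\sqrt t)^2$, it reads $R_2\le C(t\,\mathrm{SNR_2})$, $R_1\le C\!\left(\tfrac{\mathrm{SNR_1}(1+a\,\mathrm{SNR_{side}})}{1+a\,\mathrm{SNR_{side}}+t\,\mathrm{INR}}\right)$. Given $\bar\lambda$, I would choose $K$ so that $t=\bar\lambda/2$, i.e.\ $K=\sqrt{2/\bar\lambda}-1\ge0$ (the degenerate $\bar\lambda=0$, where the outer bound collapses to $R_2\le0$, is handled by letting $K\to\infty$). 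Then $\delta_{R_2}^{\rm EC}\le C(\bar\lambda\,\mathrm{SNR_2})-C\!\left(\tfrac{\bar\lambda}{2}\mathrm{SNR_2}\right)<W_m\log 2=W_m$. Simplifying $C_1$ minus the $R_1$ expression gives $\delta_{R_1}^{\rm EC}=W_m\log\!\left(1+\tfrac{\mathrm{SNR_1}\,t\,\mathrm{INR}}{(1+\mathrm{SNR_1})(1+a\,\mathrm{SNR_{side}})+t\,\mathrm{INR}}\right)\le W_m\log\!\left(1+\tfrac{t\,\mathrm{INR}}{1+a\,\mathrm{SNR_{side}}}\right)$, so it is enough to verify $t\,\mathrm{INR}\le 1+a\,\mathrm{SNR_{side}}$. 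If $\mathrm{INR}\le2$ this is clear, since $t\,\mathrm{INR}\le\bar\lambda\le1$. If $\mathrm{INR}>2$, the map $\bar\lambda\mapsto\tfrac{\bar\lambda}{2}\mathrm{INR}-(1-\sqrt{\bar\lambda/2})^2\mathrm{SNR_{side}}$ is increasing in $\bar\lambda$, so the worst case is $\bar\lambda=1$, where the required inequality becomes $\tfrac{\mathrm{INR}}{2}\le 1+(1-1/\sqrt2)^2\mathrm{SNR_{side}}$, i.e.\ $\mathrm{SNR_{side}}\ge(\mathrm{INR}-2)/(\sqrt2-1)^2$; since $(\sqrt2-1)^{-2}=(\sqrt2+1)^2=1+\tfrac{2}{\sqrt2-1}$, this is exactly the stated hypothesis. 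Thus $\delta_{R_1}^{\rm EC}\le W_m\log 2=W_m$, both gaps are below $\tfrac12(W_m+W_s)$ because $W\in\mathbb{N}^+$ forces $W\ge1$, and taking the union over $\bar\lambda$ gives part~2.

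The delicate step is the choice of $K$ for estimate-and-cancel: it must keep $\delta_{R_2}^{\rm EC}$ small, which pushes $t$ toward $\bar\lambda$ (little power diverted off the main channel), and at the same time keep $\delta_{R_1}^{\rm EC}$ small, which pushes $a\,\mathrm{SNR_{side}}$ up (more power on the side channel). The split $t=\bar\lambda/2$ is the tightest one compatible with a half-bit uplink loss, and the worst-case $\bar\lambda=1$ in the resulting condition $t\,\mathrm{INR}\le1+a\,\mathrm{SNR_{side}}$ is precisely what produces the constant $1+\tfrac{2}{\sqrt2-1}$; the $\mathrm{INR}\le2$ corner is the only extra bookkeeping. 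Decode-and-cancel is routine by comparison, because $\delta_{R_1}^{\rm DC}=0$ collapses everything to a one-parameter statement about $R_2$.
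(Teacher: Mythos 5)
Your proposal is correct and follows the paper's template: compare each rectangular achievable region to the (no-interference) rectangle $R_1\le C_1$, $R_2\le C(\mathrm{SNR_2})$, bound the per-coordinate gaps by one bit of the main-channel, and use $W\ge1$ to normalize by $W_m+W_s$ and land under $\tfrac{1}{1+W}\le\tfrac12$. Two minor route differences are worth flagging. For decode-and-cancel, the paper cites the optimized rate $R_2\ge C\bigl(\tfrac{\mathrm{SNR_2}\,\mathrm{SNR_{side}}}{\mathrm{SNR_2}+\mathrm{SNR_{side}}}\bigr)$ from~\cite{Jingwen} (the result of choosing the best $\lambda$ when $W\ge1$), whereas you hardwire $\lambda=\tfrac12$ and replace the citation by Bernoulli's inequality $\left(1+x/W\right)^W\ge 1+x$; this is a self-contained and slightly more elementary argument, at the cost of a weaker (but still sufficient) intermediate bound. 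For estimate-and-cancel, the paper simply fixes $K=\sqrt2-1$ and compares to the no-interference rectangle, whereas you let $K$ track the outer-bound parameter $\bar\lambda$ via $t=\bar\lambda/2$ and then argue the worst case is $\bar\lambda=1$, which recovers $K=\sqrt2-1$. This extra $\bar\lambda$-dependence is superfluous: since you already observed that the outer region reduces to the fixed rectangle $R_1\le C_1$, $R_2\le C(\mathrm{SNR_2})$, you could have compared against that directly with a single $K$, exactly as the paper does. In either variant, the same simplification $(\sqrt2-1)^{-2}=(\sqrt2+1)^2=1+\tfrac{2}{\sqrt2-1}$ produces the stated threshold on $\mathrm{SNR_{side}}$, and the same $\mathrm{INR}\le2$ corner case must be (and is) disposed of separately, so the end result matches the paper exactly.
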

\begin{proof}
\begin{enumerate}
\item We first prove the condition on half bit capacity gap for DC scheme. It is not difficult to find that the achievable rate of DC is an increasing function of $W$.
When $W\geq1$, according to \cite{Jingwen}, $R_2\geq C\left(\frac{\rm SNR_2 SNR_{side}}{\rm SNR_2+SNR_{side}}\right)$. Hence when $W\geq1$ and $\rm SNR_{side}\geq SNR_2$, comparing with the no-interference outer bound in (\ref{Noouter}) we prove that
%\begin{small}

\vspace{-10pt}
{\small
\begin{gather}
\begin{aligned}
&\delta_{R_1}^{\rm DC}=0\\
&\delta_{R_2}^{\rm DC}\leq \frac{W_m}{W_m+W_s}\mathrm{log}\left(1+\mathrm{SNR_2}\right)\\
&- \frac{W_m}{W_m+W_s}\mathrm{log}\left(\frac{\rm SNR_2 SNR_{side}}{\rm SNR_2+SNR_{side}}\right)\\
&= \frac{W_m}{W_m\!+\!W_s}\mathrm{log}\left(\!1\!+\!\frac{\rm SNR_2^2}{\rm SNR_2\!+\!SNR_{side}\!+\!SNR_2 SNR_{side}}\!\right)\\
&\leq \frac{1}{1+W}\leq\frac{1}{2}
\end{aligned}
\end{gather}
}%
%\e\nd{small}
\item For EC scheme, the bandwidth ratio between side-channel and main-channel is required to be greater than 1, i.e., $W\geq1$.
Comparing the rate region of EC in Propostion~\ref{ECrates} with the no-interference outer bound we have

\vspace{-10pt}
{\small
\begin{gather}
\begin{aligned}
\delta_{R_1}^{\rm EC}&\leq\frac{W_m}{W_m+W_s}\mathrm{log}\left(1+\mathrm{SNR_1}\right)\\
&-\frac{W_m}{W_m+W_s}\mathrm{log}\left(1+\frac{\mathrm {SNR_1}\left(1+\frac{K^2 \rm SNR_{side}}{(1+K)^2}\right)}{1+\frac{K^2 \mathrm{SNR_{side}}}{(1+K)^2}+\frac{\rm INR}{(1+K)^2}}\right)\\
&=\frac{W_m}{W_m+W_s}\mathrm{log}\left(1+\right.\\
&\left.\frac{\frac{\rm SNR_1INR}{(1+K)^2}}{1+\frac{K^2\mathrm{SNR_{side}}}{(1+K)^2}+\frac{\rm INR}{(1+K)^2}+\mathrm{SNR_1}\left(1+\frac{K^2\mathrm{SNR_{side}}}{(1+K)^2}\right)}\right)\\
\delta_{R_2}^{\rm EC}&\leq\frac{W_m}{W_m+W_s}\mathrm{log}\left(1+\mathrm{SNR_2}\right)\\
&-\frac{W_m}{W_m+W_s}\mathrm{log}\left(1+\frac{\rm SNR_2}{(1+K)^2}\right)\\
&=\frac{W_m}{W_m+W_s}\mathrm{log}\left(1+\frac{\frac{K(K+2)\mathrm{SNR_2}}{(1+K)^2}}{1+\frac{\rm SNR_2}{(1+K)^2}}\right).
\end{aligned}
\end{gather}
}%
The achievable rate region of EC is the union of rate pairs for all possible values of $K$, where $K\geq 0$.
Therefore, when $\rm SNR_{side}\geq \left(1+\frac{2}{\sqrt{2}-1}\right)(INR-2)$,
and we chose $K=\sqrt{2}-1$,
the following inequalities will hold simultaneously
 \begin{gather}
\begin{aligned}
\delta_{R_1}^{\rm EC}& \leq \frac{1}{1+W}\leq\frac{1}{2}\\
\delta_{R_2}^{\rm EC}&\leq \frac{1}{1+W}\leq\frac{1}{2}.
\end{aligned}
\end{gather}
\end{enumerate}
\end{proof}
\subsection{Discussion of capacity analysis results}
Practical communication systems usually operate in the inference-limited regime, so the strength of the signal and interference are much larger
than the thermal noise. Our one-bit capacity gap result shows that for all channel parameters and bandwidth ratio, BC scheme is asymptotic optimal in high SNR, INR limit, because one bit is relatively small compared to the rate of the users. For the two simple schemes, DC and EC can
achieve the near-optimal (half-bit gap) performance only in certain
regimes of channel values. From the capacity analysis in previous section, we also notice that the capacity gap for each scheme is inversely proportional to the
bandwidth ratio $W$, where $W\in[0, W_{\max}]$.
Often, in practical wireless communication, the bandwidth of side-channel is different from the main-channel. For example, ISM band radio can
operate on a bandwidth up to 80~MHz in 2.4~GHz. In contrast, generally the bandwidth of current 3G/4G radio is 20~MHz.
Motivated by
the application of leveraging ISM band as the side-channel
in a cellular network, the bandwidth of side-channel
usually is much larger than that of main-channel. If $W=W_{\max}$,
for all values of SNR and INR, BC is near-optimal which achieves within $\frac{1}{1+W_{\max}}$ bits/s/Hz of the capacity region.

\subsection{No side-channel Special Case}

We can obtain the system model without side-channel when bandwidth ratio $W=0$ and $Y_3=0$ in Equation~(\ref{systemequ}), which
can be modeled as causal cognitive Z-interference channel~\cite{Cog1}.
In the cognitive Z-interference channel, we assume that at time $i$, the encoder of base station will have access to the information sequence sent by M1 up to time $i-1$.

When we do not use the causal information at BS, the resulting channel is classic Z-channel. The rate region of classic Z-channel is the capacity inner bound of cognitive Z-interference
channel. And the rate region of classic Z-channel is a special case of the achievable rate region of BC scheme in Proposition~\ref{BCrate} when $W=0$.
The sum-rate can be achieved by treating interference as noise in the weak regime
and joint decoding in the strong interference regime.
%The capacity region of classic Z-channel is established in the strong and very strong interference regimes~\cite{Car,sas} and Han-Kobayashi strategies \cite{han1981new}
%yields the best achievable rate region in the weak interference regime. \textcolor{red}{Jingwen, is the inner bound?}
%cross link is smaller than that of the corresponding direct link, the sum-capacity is achieved by treating interference as noise. In the strong interference regime
%where the cross channel gain is larger than the corresponding direct channel gain, the sum-capacity is achieved when the receiver decodes both the desired and
%interfering signals, namely, joint decoding. In the very strong interference regime, Carleial~\cite{Car} shows that the sum-capacity is as good as having no interference at all.

The state-of-art approaches based on block Markov codes \cite{Cog1, Cog2} do not seem to enhance the achievable rate region over Z-channel~(to the best of our knowledge), a counter-intuitive result which may be due to the causality of information at BS. At time $i$, the way that the transmitter BS can obtain the causal message of M1 is by decoding the message through
the direct link between M1 and BS before time $i$, so we can not take advantage of the decoding schemes for block Markov codes which require delay.
Thus we conjecture
that this causal information will not help improve the achievable sum-rate over classic Z-channel.
Nevertheless, we can show that the sum-capacity is established in the high $\rm SNR, INR$ limit.
As $\rm SNR_1, SNR_2, INR$ $\rightarrow\infty$, with their ratios being kept constant, we obtain the following theorem.
\begin{theorem}
For the Gaussian three-node full-duplex network, the asymptotic sum-capacity is established
  \begin{gather}
  \begin{aligned}
 & C_{\mathrm{sum}}^{\rm No-SC}  =\\
 & \begin{cases}
  C(\rm {SNR}_2) + C\left(\frac{\rm {SNR}_1}{1+\rm {INR}}\right)~\rm INR\leq \rm SNR_2, \\
   C\left(\rm {SNR}_1 + \rm {INR}\right)~\rm SNR_2\leq\rm INR\leq\rm SNR_2(1+\rm SNR_1),\\
   C(\rm {SNR}_1) + C(\rm {SNR}_2)~\rm INR\geq \rm SNR_2(1+\rm SNR_1).
   \end{cases}
   \end{aligned}
\end{gather}
\end{theorem}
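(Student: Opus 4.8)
The plan is to sandwich $C_{\mathrm{sum}}^{\rm No-SC}$ between an inner bound obtained by specializing Proposition~\ref{BCrate} to $W=0$ and an outer bound obtained from Theorem~\ref{sideub} with $W=0$ together with the trivial point-to-point bounds~(\ref{Noouter}), and then to verify that the two coincide in the prescribed limit. For achievability, setting $W=0$ (equivalently $Y_3=0$, which forces $\lambda=0$, $\bar\lambda=1$) in Proposition~\ref{BCrate} gives the claimed value in every regime, in fact exactly: in the weak regime $\rm INR\le SNR_2$ the choice $\beta=1$ (treating interference as noise) yields the corner pair $R_1=C\!\left(\mathrm{SNR_1}/(1+\mathrm{INR})\right)$, $R_2=C(\mathrm{SNR_2})$; in the strong regime the region~(\ref{BCstrong}) contains a pair with $R_1+R_2=C(\mathrm{SNR_1}+\mathrm{INR})$, and this is exactly the range in which $C(\mathrm{SNR_1}+\mathrm{INR})\le C(\mathrm{SNR_1})+C(\mathrm{SNR_2})$; in the very strong regime the capacity region is already the box~(\ref{PTP}) by Carleial~\cite{Car}. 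So only the converse requires the limit.

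For the converse in the very strong regime, the no-interference bound~(\ref{Noouter}) already gives $R_1+R_2\le C_1+C_2$. For the strong regime, I would hand $S_1$ to receiver M2 by a genie: since $\rm INR\ge SNR_2$, after removing $\gamma_1 X_1$ node M2 sees a less-noisy observation of $X_2$ than BS, so M2 can decode $S_2$ whenever BS can; hence $(R_1,R_2)$ lies in the capacity region of the multiple-access channel $(X_1,X_2)\to Y_1$, which with $R_2\le C(\mathrm{SNR_2})$ gives $R_1+R_2\le C(\mathrm{SNR_1}+\mathrm{INR})$, matching the inner bound. (As in the proof of Theorem~\ref{sideub}, the genie $V_2=\gamma^m_{21}X_2+Z_1$ at BS is used to neutralize the causal cognition $X_{1i}=f(S_1,Y_2^{i-1})$, which would otherwise allow $X_1$ and $X_2$ to be correlated.)

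The substantive case is the weak regime, where the exact sum-capacity of the Gaussian Z-interference channel is not known, so the statement is necessarily asymptotic. Here I would couple the two flows: a Fano/genie argument bounds $R_1\le\frac1n I(S_1;Y_1^n)=\frac1n\big(h(Y_1^n)-h(\gamma^m_{21}X_2^n+Z_1^n)\big)$, and then lower bound $h(\gamma^m_{21}X_2^n+Z_1^n)$ by an entropy-power estimate driven by $R_2$ — the ``corner-point'' mechanism: an M1 codeword that reliably carries rate $R_2$ close to $C(\mathrm{SNR_2})$ to BS must look nearly Gaussian with nearly full power at M2, so its effective interference power there is essentially $|\gamma^m_{21}|^2 P_2$. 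Maximizing the resulting bound on $R_1+R_2$ over $R_2\in[0,C(\mathrm{SNR_2})]$ puts the optimum at $R_2=C(\mathrm{SNR_2})$, and letting $\mathrm{SNR_1},\mathrm{SNR_2},\mathrm{INR}\to\infty$ with fixed ratios collapses it to $C(\mathrm{SNR_2})+C\!\left(\mathrm{SNR_1}/(1+\mathrm{INR})\right)+o(1)$; equivalently, the bounded additive gap between the inner bound and the Theorem~\ref{sideub} outer bound (with $W=0$) is negligible against the $\Theta(\log\mathrm{SNR})$ rates.

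I expect this last step to be the main obstacle: the quantitative entropy-power / MMSE estimate linking ``$R_2$ near $C(\mathrm{SNR_2})$'' to ``$X_2$ acts as nearly-Gaussian interference at M2'' is exactly the delicate ingredient (and the reason the finite-SNR sum-capacity stays open), so the care is all in controlling the residual $o(1)$ and in making sure the causal cognition at BS does not reopen a rate gap. The achievability and the strong/very-strong converses are, by comparison, routine.
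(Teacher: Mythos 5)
Your sandwich structure (BC inner bound at $W=0$ vs.\ Theorem~\ref{sideub} outer bound at $W=0$, together with~(\ref{Noouter}) and~(\ref{PTP}) for the very-strong case) is exactly the structure the paper uses, and your achievability calculations match: $\beta=1$ (TIN) in the weak regime, $\beta=0$ (joint decoding) in the strong regime, Carleial's box in the very-strong regime. Where you diverge is the converse. The paper never touches an entropy-power / corner-point argument; it simply takes the already-proved genie outer bound of Theorem~\ref{sideub} (with $W=0$, $\lambda=0$) and shows the \emph{ratio} $\overline{C}^{\rm No\text{-}SC}_{\rm sum}/\underline{C}^{\rm No\text{-}SC}_{\rm sum}\to 1$ in each regime, because both bounds are $\log(\mathrm{SNR}_2)+\log(\mathrm{SNR}_1+\mathrm{INR})-\log(\mathrm{INR})+O(1)$ (weak regime, using that the $2\sqrt{\bar\lambda\,\mathrm{SNR}_1\mathrm{INR}}$ correlation term costs at most one bit and $C(\mathrm{SNR}_2/(1+\mathrm{INR}))$ is a constant once $\mathrm{INR}\ge\mathrm{SNR}_2$), together with the observation~(\ref{pp2}) that the outer bound's strong/very-strong threshold $\mathrm{INR}^{*}$ coincides asymptotically with $\mathrm{SNR}_2(1+\mathrm{SNR}_1)$. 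Your closing ``equivalently, the bounded additive gap \ldots is negligible against the $\Theta(\log\mathrm{SNR})$ rates'' \emph{is} the paper's proof; the EPI/MMSE machinery you flag as ``the main obstacle'' is not used at all, so you are substantially over-estimating the difficulty of the weak regime. What that machinery would buy is the stronger \emph{additive} $o(1)$ statement, but the theorem as proved is the weaker ratio statement, and the bounded-gap route gets there with no delicate steps.

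For the strong regime you propose a different converse: genie $S_1$ to M2 plus a less-noisy argument to invoke the MAC bound $R_1+R_2\le C(\mathrm{SNR}_1+\mathrm{INR})$. Be careful here: $X_{1i}$ is a function of $(S_1,Y_2^{i-1})$, so handing M2 the message $S_1$ alone does not let it reconstruct and subtract $X_1^n$; and the genie $V_2^n$ in Theorem~\ref{sideub} is given to BS, not to M2, so the parenthetical does not by itself repair the gap. Even after fixing this, the causal cognition lets $X_1$ and $X_2$ be correlated, so a clean MAC-with-independent-inputs bound does not hold exactly; the paper's outer bound carries the extra $2\sqrt{\bar\lambda\,\mathrm{SNR}_1\mathrm{INR}}$ term for precisely this reason. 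That correlation penalty is at most one bit and hence vanishes in the ratio, so your argument still works asymptotically, but it is not the clean ``M2 is less noisy, therefore $R_1+R_2\le C(\mathrm{SNR}_1+\mathrm{INR})$'' step you describe; the paper's uniform ratio argument sidesteps all of this.
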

\begin{figure*}[!t]
% ensure that we have normalsize text
\normalsize
% Store the current equation number.
\setcounter{MYtempeqncnt}{\value{equation}}
% Set the equation number to one less than the one
% desired for the first equation here.
% The value here will have to changed if equations
% are added or removed prior to the place these
% equations are referenced in the main text.
\setcounter{equation}{35}
\begin{align}
\label{pp1}
 \lim_{\substack{\rm SNR_1,SNR_2,\\
\rm INR \to \infty}} \!\frac{\overline{C}^{\rm No-SC}_{\mathrm{sum}}}{\underline{C}^{\rm No-SC}_{\mathrm{sum}}}&=\lim_{\substack{\rm SNR_1,SNR_2,\\
\rm INR \to \infty}}\!\frac{\mathrm{log(1+\alpha SNR_2)\!+\!log\left(\frac{1+INR+SNR_1+2\sqrt{\bar{\alpha}\mathrm{SNR_1INR}}}{1+\alpha\mathrm{INR}}\right)}}{\mathrm{log}(1+\beta \mathrm{SNR}_2)\!+\!\mathrm{log}\left(\frac{1+\mathrm{INR+SNR_1}}{1+\beta\mathrm{INR}}\right)} \nonumber \\
&=\lim_{\substack{\rm SNR_1,SNR_2,\\
\rm INR \to \infty}} \!\! \frac{\mathrm{log}(\alpha\mathrm{SNR_2})\!+\!\mathrm{log}(\mathrm{INR\!+\!SNR_1}\!+\!2\sqrt{\bar{\alpha}\mathrm{SNR_1INR}})\!-\!\mathrm{log}(\alpha\mathrm{INR})}{\mathrm{log}(\beta\mathrm{SNR_2})\!+\!\mathrm{log(SNR_1\!+\!INR)}\!-\!\mathrm{log}(\beta\mathrm{INR})}\nonumber \\
&=\lim_{\substack{\rm SNR_1,SNR_2,\\
\rm INR \to \infty}} \!\! \frac{\mathrm{log}(\mathrm{SNR_2})+\mathrm{log}(\mathrm{INR+SNR_1})-\mathrm{log}(\mathrm{INR})+f_1(\alpha)}{\mathrm{log}(\mathrm{SNR_2})+\mathrm{log(SNR_1+INR)}-\mathrm{log}(\mathrm{INR})+f_2(\beta)}=1,
\end{align}
where $f_1(\alpha)$ and $f_2(\beta)$ are constant for given $\alpha$ and $\beta$, respectively.\\
\text{} % Restore the current equation number.
\setcounter{equation}{\value{MYtempeqncnt}}
% IEEE uses as a separator
\text{}
\hrulefill
% The spacer can be tweaked to stop underfull vboxes.
\vspace*{8pt}
\end{figure*}
\begin{figure*}[t] %[!b]
% The spacer can be tweaked to stop underfull vboxes.
\vspace*{5pt}
% ensure that we have normalsize text
\normalsize
% Store the current equation number.
\setcounter{MYtempeqncnt}{\value{equation}}
% Set the equation number to one less than the one
% desired for the first equation here.
% The value here will have to changed if equations
% are added or removed prior to the place these
% equations are referenced in the main text.
\setcounter{equation}{36}
\begin{gather}
\begin{aligned}
\label{pp2}
&\lim_{\substack{\rm SNR_1,SNR_2,\\
\rm INR \to \infty}} \frac{\rm INR^*}{\rm SNR_2(1+\rm SNR_1)}=\lim_{\substack{\rm SNR_1, SNR_2,\\
\rm INR \to \infty}}  \frac{2\rm SNR_1+SNR_2(1+SNR_1)-2\sqrt{SNR_1(SNR_1+SNR_2+SNR_1SNR_2)}}{\rm SNR_2(1+SNR_1)}=1.
\end{aligned}
\end{gather}
\setcounter{equation}{\value{MYtempeqncnt}}
% IEEE uses as a separator
\hrulefill
\end{figure*}
\begin{proof}
We only need to prove that the upper bound on the sum-capacity of Gaussian three-node full-duplex network is tight as $\rm SNR_1, SNR_2, INR$ $\rightarrow\infty$.
We can compute the sum-capacity upper bound directly from Theorem~\ref{sideub} and lower bound from Proposition~\ref{BCrate} by substituting $W=0$ and $\lambda=0$.
We use $\overline{C}^{\rm No-SC}_{\mathrm{sum}}$ and $\underline{C}^{\rm No-SC}_{\mathrm{sum}}$ to denote upper bound and lower bound on the no-side-channel sum-capacity, respectively.

In the weak interference regime where
$\rm INR\leq \rm SNR_2$, when $\rm SNR_1, SNR_2, INR$ $\rightarrow\infty$, the lower bound matches with upper bound as shown in (\ref{pp1}) at the top of the page.

The strong interference regime for the sum-capacity lower bound is defined as $\rm SNR_2\leq\rm INR\leq\rm SNR_2(1+\rm SNR_1)$, and for the sum-capacity upper bound is defined as $\rm SNR_2\leq\rm INR\leq\rm INR^*$, where
%\begin{small}
$\rm INR^*=2\rm SNR_1+SNR_2(1+SNR_1)
-2\sqrt{SNR_1(SNR_1+SNR_2+SNR_1SNR_2)}.$
Thus in the strong interference regime, using the result obtained in (\ref{pp2}) at the top of the page, we have
%\begin{small}
\begin{gather}
\begin{aligned}
&\lim_{\mathrm{SNR_1, SNR_2, INR} \to \infty}\frac{\overline{C}^{\rm No-SC}_{\mathrm{sum}}}{\underline{C}^{\rm No-SC}_{\mathrm{sum}}}= \\ &\lim_{\mathrm{SNR_1, SNR_2, INR} \to \infty}\frac{C\left(\rm SNR_1+INR+2\sqrt{SNR_1INR}\right)}{  C\left(\rm {SNR}_1+\rm {INR}\right)} \nonumber\\
&~~~~~~~~~=1.
\end{aligned}
\end{gather}
%\end{small}

If $\rm INR\geq INR^*$  or $\rm INR\geq \rm SNR_2(1+\rm SNR_1)$, the constraint on $R_1+R_2$ in (\ref{outerstrong}) becomes redundant and the sum-capacity upper bound reduces to (\ref{PTP}) when $W=0, \lambda=0$.
\end{proof}

Therefore, in the high SNR and INR limit, for the Gaussian three-node full-duplex network, treating interference as noise and joint decoding are
asymptotically sum-capacity achieving in weak and strong interference regimes, respectively.
\section{Asymptotic and Finite SNR Multiplexing Gain Comparisons}
\subsection{High SNR Multiplexing Gain }
For simplicity, we assume $\rm SNR_1=SNR_2=SNR$ and we define $\mu=\frac{\rm log~INR}{\rm log~SNR}$, $\nu=\frac{\rm log~SNR_{\rm side}}{\rm log~SNR}$. Parameter $\mu$ captures the interference level, parameters $\nu$ and $\frac{\nu}{\mu}$ capture the side-channel level compared with main-channel.
%We define the multiplexing gain as
%the ratio between achievable sum-rate and the single-user capacity, which is
%\begin{figure*}[!b]
%\begin{align}
% \lim_{\substack{\rm SNR_1,SNR_2,\\
%\rm INR \to \infty}} \!\frac{\overline{C}^{\rm No-SC}_{\mathrm{sum}}}{\underline{C}^{\rm No-SC}_{\mathrm{sum}}}&=\lim_{\substack{\rm SNR_1,SNR_2,\\
%\rm INR \to \infty}}\!\frac{\mathrm{log(1+\alpha SNR_2)\!+\!log\left(\frac{1+INR+SNR_1+2\sqrt{\bar{\alpha}\mathrm{SNR_1INR}}}{1+\alpha\mathrm{INR}}\right)}}{\mathrm{log}(1+\beta \mathrm{SNR}_2)\!+\!\mathrm{log}\left(\frac{1+\mathrm{INR+SNR_1}}{1+\beta\mathrm{INR}}\right)} \nonumber \\
%&=\lim_{\substack{\rm SNR_1,SNR_2,\\
%\rm INR \to \infty}} \!\! \frac{\mathrm{log}(\alpha\mathrm{SNR_2})\!+\!\mathrm{log}(\mathrm{INR\!+\!SNR_1}\!+\!2\sqrt{\bar{\alpha}\mathrm{SNR_1INR}})\!-\!\mathrm{log}(\alpha\mathrm{INR})}{\mathrm{log}(\beta\mathrm{SNR_2})\!+\!\mathrm{log(SNR_1\!+\!INR)}\!-\!\mathrm{log}(\beta\mathrm{INR})}\nonumber \\
%&=\lim_{\substack{\rm SNR_1,SNR_2,\\
%\rm INR \to \infty}} \!\! \frac{\mathrm{log}(\mathrm{SNR_2})+\mathrm{log}(\mathrm{INR+SNR_1})-\mathrm{log}(\mathrm{INR})+f_1(\alpha)}{\mathrm{log}(\mathrm{SNR_2})+\mathrm{log(SNR_1+INR)}-\mathrm{log}(\mathrm{INR})+f_2(\beta)}=1, \label{pp1}
%\end{align}
%where $f_1(\alpha)$ and $f_2(\beta)$ are constant for given $\alpha$ and $\beta$, respectively.\\
%\text{} \hrulefill
%\end{figure*}

In this paper, we use multiplexing gain as the metric to characterize the rate improvement by leveraging the side-channel which is given as
\setcounter{equation}{37}
\begin{gather}
\begin{aligned}
M=\frac{R_{\mathrm{sum}}(\rm SNR,INR)}{C_{\rm single-user}(\rm SNR)} ,
\end{aligned}
\end{gather}
where $C_{\rm single-user}=W_m\rm log\left(1+\rm SNR\right).$

The single-user capacity is the maximum uplink/downlink rate of the three-node network in the main-channel.
Ideally, a perfect full-duplex can achieve the maximum multiplexing gain of 2 when there is no interference, which corresponds to the largest sum-capacity of $2W_m \log\left(1+\rm SNR\right)$.
Assuming $\rm SNR\gg1, INR\gg1$ and $\mathrm {SNR_{side}}\gg W$, we can compute the asymptotic multiplexing gain of our proposed schemes as $\rm SNR, SNR_{\rm side}, INR$ $\rightarrow\infty$ and $\mu\geq0, \nu\geq0, W\in[0,W_{\rm max}]$.

First we derive the asymptotic multiplexing gain of the no-side-channel sum-capacity over single-user capacity in the high SNR and INR limit, which is given by
%\begin{small}
\begin{gather}
\begin{aligned}
  M^{\rm No-SC}=\frac{C_{\mathrm{sum}}^{\rm No-SC}}{C_{\rm single-user}}=
  \begin{cases}
  2-\mu &  ~~0\leq \mu<1,\\
   \mu &~~ 1\leq\mu<2,\\
   2     &~~ \mu\geq2.
   \end{cases}\label{No-mg}
\end{aligned}
\end{gather}
%\end{small}

When $\mu=0$, there is no interference. And $0\leq \mu<1$ corresponds to the weak interference regime while $\mu\geq1$ corresponds to the strong and very strong interference regimes. In the no-side-channel case, only when interference level is zero or above very strong can a maximum value of 2 for asymptotic multiplexing gain be achieved. However, INI causes the multiplexing gain to reduce significantly in all other regimes.
\begin{figure*}[t] %[!b]
% The spacer can be tweaked to stop underfull vboxes.
\vspace*{4pt}
% ensure that we have normalsize text
\normalsize
% Store the current equation number.
\setcounter{MYtempeqncnt}{\value{equation}}
% Set the equation number to one less than the one
% desired for the first equation here.
% The value here will have to changed if equations
% are added or removed prior to the place these
% equations are referenced in the main text.
\setcounter{equation}{40}
\begin{gather}
\begin{aligned}
\label{pp3}
\frac{R_{\mathrm{sum}}^{\rm BC}}{C_{\rm single-user}}&=\frac{W_m\min\left\{\rm log\left(1+\frac{SNR}{2}\right)+\rm log(1+\bar{\lambda}SNR),\rm log\left(1+\frac{SNR+\bar{\lambda}INR-1}{2}\right)+\rm log\left(1+\frac{SNR}{INR}\right)+W\rm log\left(1+\frac{\lambda SNR_{side}}{W}\right)\!\right\}}{W_m\rm log(1+SNR)}\\
&\approx\frac{\min\left\{\rm 2log\left(SNR\right),\rm 2log\left(SNR\right)-{\mathrm log(INR)}+W\rm log\left(SNR_{side}\right)\!\right\}}{\rm log(SNR)}\\
&\approx\min\{2,2+W\nu-\mu\}.
\end{aligned}
\end{gather}
\setcounter{equation}{\value{MYtempeqncnt}}
% IEEE uses as a separator
\hrulefill
\end{figure*}
For BC, the achievable sum-rate can be expressed as
%\begin{small}
\begin{gather}
\begin{aligned}
R_{\mathrm{sum}}^{\rm BC}\leq &\max_{\substack{0\leq\lambda\leq1\\ 0\leq\beta\leq 1}}
C\left(\frac{\mathrm{SNR_1}}{\beta \bar{\lambda}\mathrm{INR}+1}\right)+\min\!\left\lbrace C\left(\bar{\lambda}\mathrm{SNR_2}\right),\right.\\
&\left.C\left(\beta\bar{\lambda}\mathrm{SNR_2}\right)+C\left(\frac{\bar{\beta}\bar{\lambda}\mathrm{INR}}{\mathrm{SNR_1}+\beta\bar{\lambda}\mathrm{INR}+1}\right)\right.\\
&\left.+WC\left(\frac{\lambda\mathrm{SNR_{side}}}{W}\right)\!\right\rbrace.
\label{BCsum}
\end{aligned}
\end{gather}
%\end{small}
And $\beta^*$ which maximizes achievable sum-rate of BC in~(\ref{BCsum}) is
\begin{eqnarray}
 \beta^*(a)=
  \begin{cases}
    \frac{(1+\bar{\lambda}\mathrm{SNR_2})(1+\mathrm{SNR_1})-a(1+\mathrm{SNR_1}+\bar{\lambda}\mathrm{INR})}{a\bar{\lambda}\mathrm{SNR_2}(1+\mathrm{SNR_1}+\bar{\lambda}\mathrm{INR})-\bar{\lambda}\mathrm{INR}(1+\bar{\lambda}\mathrm{SNR_2})}
    & \mu\leq1, \\
    0      & \mu\geq1,
   \end{cases}\nonumber
\end{eqnarray}
where $a=\left(1+\frac{\lambda\mathrm{SNR_{side}}}{W}\right)^W$.

As $\rm SNR, SNR_{\rm side}, INR$ $\rightarrow\infty$, in the weak interference regime~(i.e., $0\leq\mu<1$),
set the power for the private message
of M1 at the level of Gaussian noise at the receiver M2, namely, let $\beta=\frac{1}{\bar{\lambda}\rm INR}$. Thus for a fixed $\lambda~(\lambda\neq 0)$, we derive the expression in (\ref{pp3}) at the top of the page.

In the strong interference regime where $\mu\geq1$, $\beta=0$. As $\rm SNR, SNR_{\rm side}, INR$ $\rightarrow\infty$, we have
\setcounter{equation}{41}
\begin{gather}
\begin{aligned}
&\frac{R_{\mathrm{sum}}^{\rm BC}}{C_{\rm single-user}}\approx\\
&\frac{\min\left\{\rm 2log\left(SNR\right),\rm log (INR)+W\rm log\left(SNR_{side}\right)\!\right\}}{\rm log(SNR)}\\
&\approx\min\{2,\mu+W\nu\}.
\end{aligned}
\end{gather}
%\end{small}

From Theorem~\ref{onebitbc}, we have proved that the inner bound achieved by BC is within one bit of the outer bound by a simple common-private power split when the power splitting of private message of M1 is set at the level of Gaussian noise at the receiver M2. Hence in the high SNR and INR limit, BC scheme is asymptotic optimal in that it is capacity-achieving for the Gaussian side-channel assisted three-node network.
Therefore, the asymptotic multiplexing gain of the sum-capacity of the Gaussian side-channel assisted three-node full-duplex network over single-user capacity is
%\begin{small}
\begin{gather}
\begin{aligned}
 & M^{\rm side-channel}=\frac{C_{\mathrm{sum}}^{\rm side-channel}}{C_{\rm single-user}}\\
 & =M^{\rm BC}=
  \begin{cases}
  \min\left\{2,2+W\nu-\mu\right\}~0\leq \mu<1,\\
   \min\{2,\mu+W\nu\}~\mu\geq1.
   \end{cases}\label{BC-mg}
\end{aligned}
\end{gather}
%\end{small}
By comparing (\ref{BC-mg}) to (\ref{No-mg}), we can derive the multiplexing gain improvement of the side-channel sum-capacity over the no-side-channel sum-capacity,
%\begin{small}
\begin{gather}
\begin{aligned}
 \frac{M^{\rm side-channel}}{M^{\rm No-SC}}&\!=\!
 \begin{cases}
  \min\left\{\frac{2}{2-\mu},1+\frac{W\nu}{2-\mu}\right\}~0\leq \mu<1,\\
   \min\{\frac{2}{\mu},1+\frac{W\nu}{\mu}\}~1\leq\mu<2.
   \end{cases}\label{side-no-mg}
\end{aligned}
\end{gather}
%\end{small}
When $W\nu=0$, namely, there is no side-channel available, (\ref{BC-mg}) will degenerate into (\ref{No-mg}).
From~(\ref{side-no-mg}), we can see that the improvement is dominated by the interference level~(i.e., $\mu$), as well as the side-channel condition~(i.e., $W\nu$). In the weak interference regime where $\mu\in[0,1)$, the improvement is an increasing function of the interference level,
while in the strong interference regime where $\mu\geq1$, the improvement will decrease as $\mu$ increases. Therefore, we can obtain the maximum multiplexing gain improvement when $\mu=1$. The improvement rises as $W\nu$ increases till reaching the maximum multiplexing gain of 2. If $W\nu\geq1$ and $\mu=1$, then leveraging the side-channel can double the asymptotic multiplexing gain of the no-side-channel case.

Similarly, we can derive the asymptotic multiplexing of the suboptimal schemes,
%\begin{small}
\begin{gather}
\begin{aligned}
  M^{\rm CC}=\frac{R_{\mathrm{sum}}^{\rm CC}}{C_{\rm single-user}}=
  \begin{cases}
   \min\left\{2,2+W\nu-\mu\right\}~0\leq \mu<1,\\
   \min\left\{2,1+W\nu\right\}~\mu\geq1.
   \end{cases}\label{CC-mg}
\end{aligned}
\end{gather}
%\end{small}
%\begin{small}
\begin{gather}
\begin{aligned}
  M^{\rm DC}=\frac{R_{\mathrm{sum}}^{\rm DC}}{C_{\rm single-user}}=\min\{2,1+W\nu\} ~~\mu\geq0, \label{DC-mg}
\end{aligned}
\end{gather}
%\end{small}
%\begin{small}
\begin{gather}
\begin{aligned}
&  M^{\rm EC}=\frac{R_{\mathrm{sum}}^{\rm EC}}{C_{\rm single-user}}=\\
&\left\{
  \begin{array}{l l}
  \min\{2,2+\nu-\mu\}~~~0\leq \mu<\nu+1,W\in\mathbb{N}^+,\\
   1~~~\mu\geq\nu+1,W\in\mathbb{N}^+.
    \end{array} \right.\label{EC-mg}
\end{aligned}
\end{gather}
%\end{small}

Comparing (\ref{BC-mg}) and (\ref{CC-mg})-(\ref{EC-mg}), we can find out that for all parameters, CC scheme is asymptotic sum-capacity achieving in the weak interference regime, while in the strong interference regime, the performance of CC coincides with that of the DC scheme. Moreover, it is straightforward
to show that CC achieves larger asymptotic multiplexing gain than DC and EC in the weak interference regime.

Next we discuss the effect of the side-channel on the multiplexing gain of our proposed schemes.  When $W=1$, EC is also asymptotically sum-capacity achieving in the weak interference regime in addition to BC and CC. Specifically, for EC scheme, when $W\in\mathbb{N}^+$ and $\nu\geq\mu$, EC can achieve a asymptotic multiplexing gain of 2 for all interference levels. In addition, DC can achieve a asymptotic multiplexing gain of 2 for all interference levels when $W\nu\geq1$. For BC and CC, a asymptotic multiplexing gain of 2 can be achieved for all interference levels when $W\nu\geq\mu$ for $\mu\in[0,1)$, and $W\nu\geq1$ for $\mu\geq1$.
\begin{figure}[t!] %[h!]
\begin{center}
  \subfigure[$W<1$ and $\nu\geq\mu$]
  {\label{asmg1}{\includegraphics[width=0.55\textwidth,trim = 30mm
      34mm 20mm 40mm, clip]{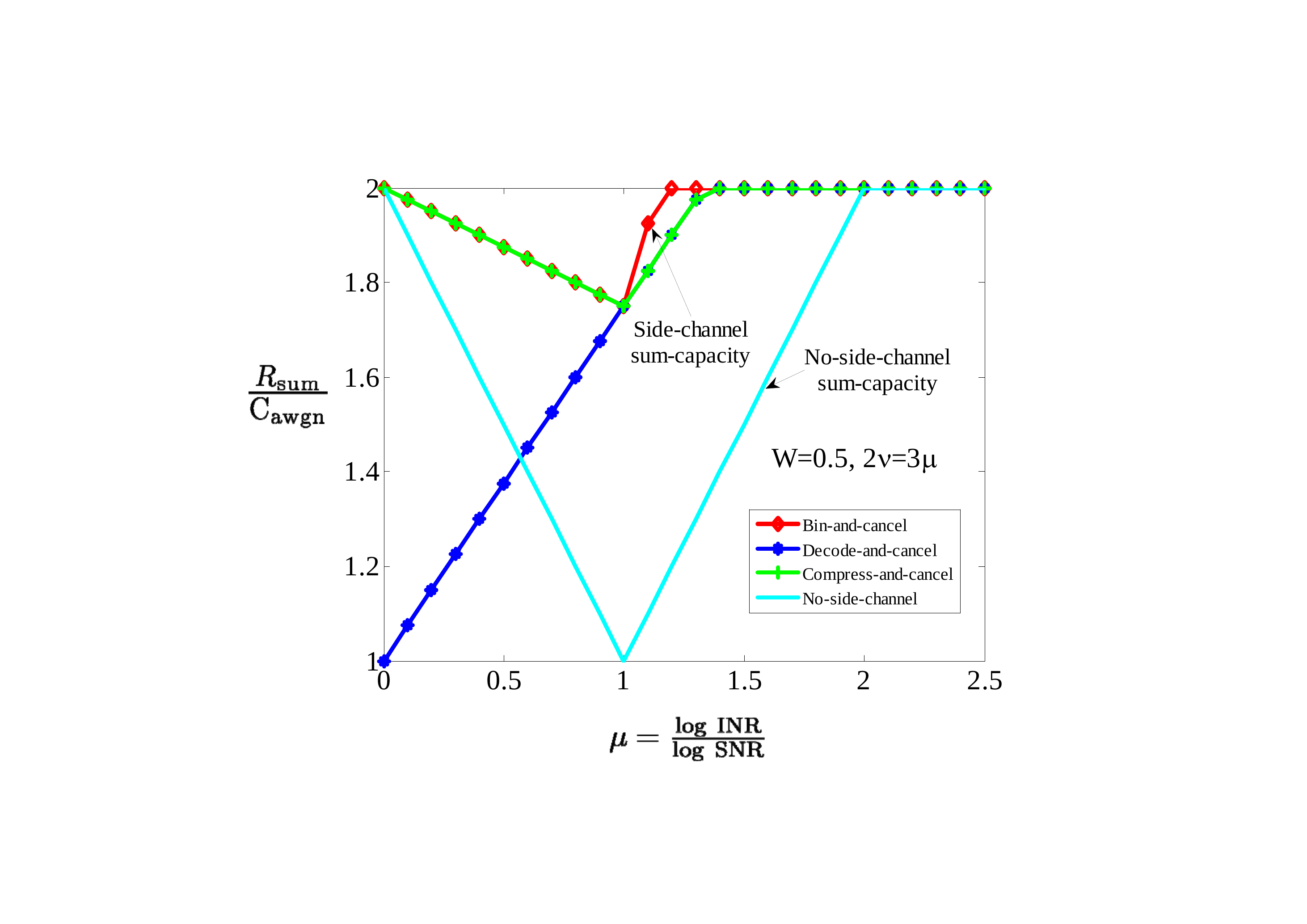}}}
  \subfigure[$W\geq1$ and $\nu<\mu$]
{\label{asmg2}{\includegraphics[width=0.55\textwidth,trim = 30mm
      30mm 20mm 40mm, clip]{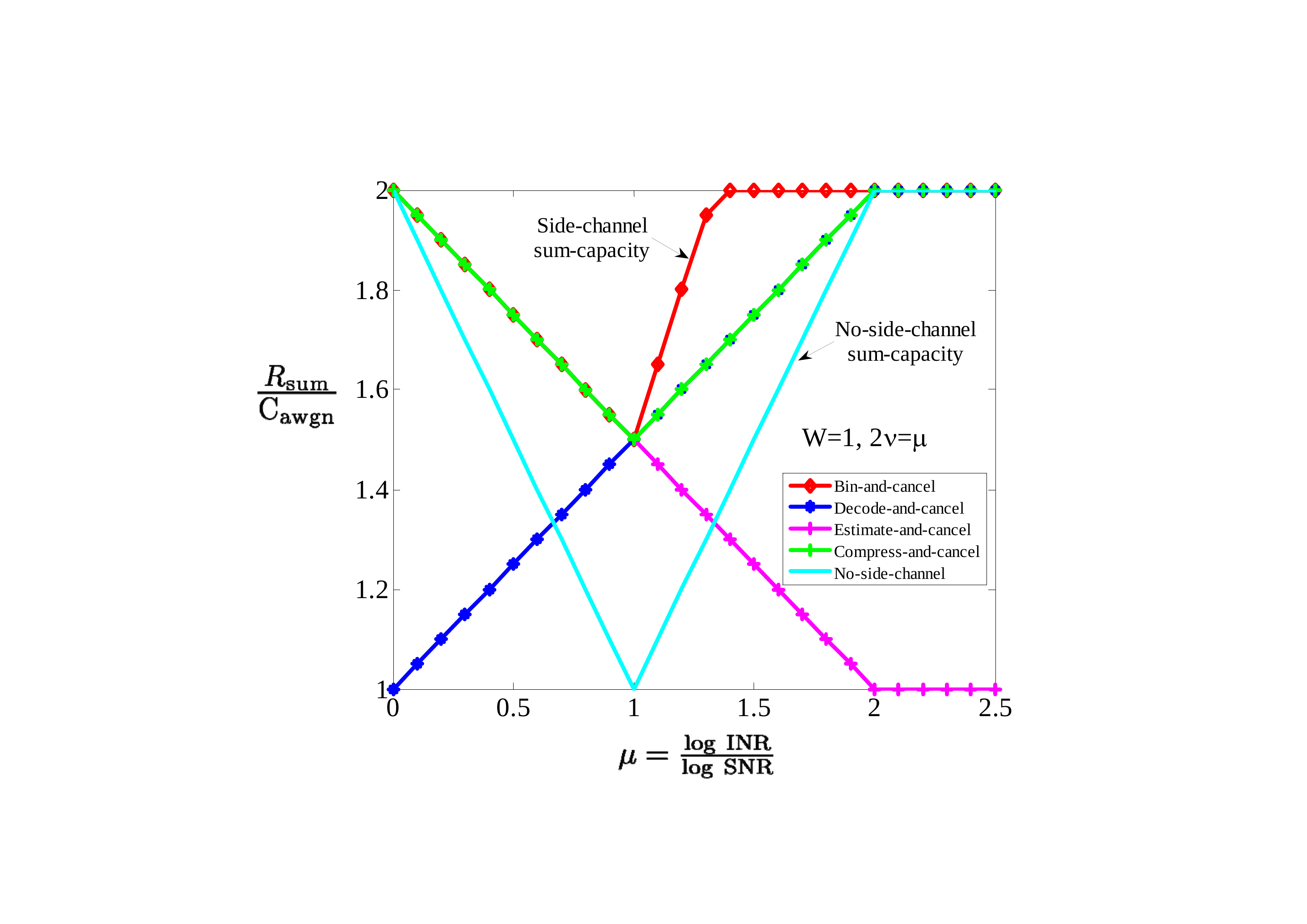}}}
\end{center}
\caption{Asymptotic multiplexing gain for optimal scheme and suboptimal schemes versus no-side-channel sum-capacity. The suboptimal schemes are compress-and-cancel, decode-and-cancel and estimate-and-cancel.}
\label{AMG}
\end{figure}

The asymptotic multiplexing gain of each scheme is plotted in Figure~\ref{AMG}. When $\nu<\mu$, the multiplexing gain of EC is a decreasing function of $\mu$. The intuition is that the side-channel can not provide a better estimate of the interfering signal when the side-channel is worse than the interference link~(i.e., main-channel between M1 and M2). While for DC, when $\frac{\nu}{\mu}$ is fixed, the multiplexing gain of DC will scale with interference level. However, DC has poor performance when the interference level is very low. The reasons behind this is that the optimal power allocated to the side-channel is inversely proportional to the interference level as derived in \cite{Jingwen}. Therefore, uplink Node M1 needs to allocate most of its power to the side-channel to decode the interfering signal in the very weak interference regime, with little power left to send its own data to the base station through the main-channel. However, as the interference level increases, the optimal power allocated to the side-channel drops rapidly, thus improving the performance of DC.
\subsection{Finite SNR Multiplexing Gain and Optimal Power Allocation}
In this section, we show numerically the performance of the proposed schemes for finite SNR values. Figure~\ref{finiteSNR} depicts the multiplexing gain of each scheme when $\mathrm{SNR}=15$~dB, $W=1$ and $\nu=\mu$, and Figure~\ref{OPA} shows the corresponding optimal power allocated to the side-channel which maximizes achievable sum-rate of each proposed scheme. In Figure~\ref{finiteSNR}, BC, DC, CC and EC are shown to achieve peak improvement over the no-side-channel case of 1.57 and 1.51, 1.41 and 1.22, respectively. In addition, we can observe that while EC usually surpasses DC in the weak interference regimes, DC dominates in the strong interference regimes. The intuition behind this result is that when two mobile nodes are very close to each other which leads to strong INI, M2 can better decode the information from M1 by the side-channel. However, when the two mobile nodes are not so close resulting in weak INI, a statistical estimate of the interfering signal by the side-channel is better than decoding the signal to help cancel INI. For the percentage of the optimal power allocated
to the side-channel of each scheme, only a small portion of transmit power needs to be allocated to
the side-channel in the regime where the proposed schemes offer better performance.
\begin{figure}[t!] %[h!]
  \centering
%\flushleft
%\flushright
   \includegraphics[width=0.5\textwidth,trim = 40mm
      35mm 30mm 35mm, clip]{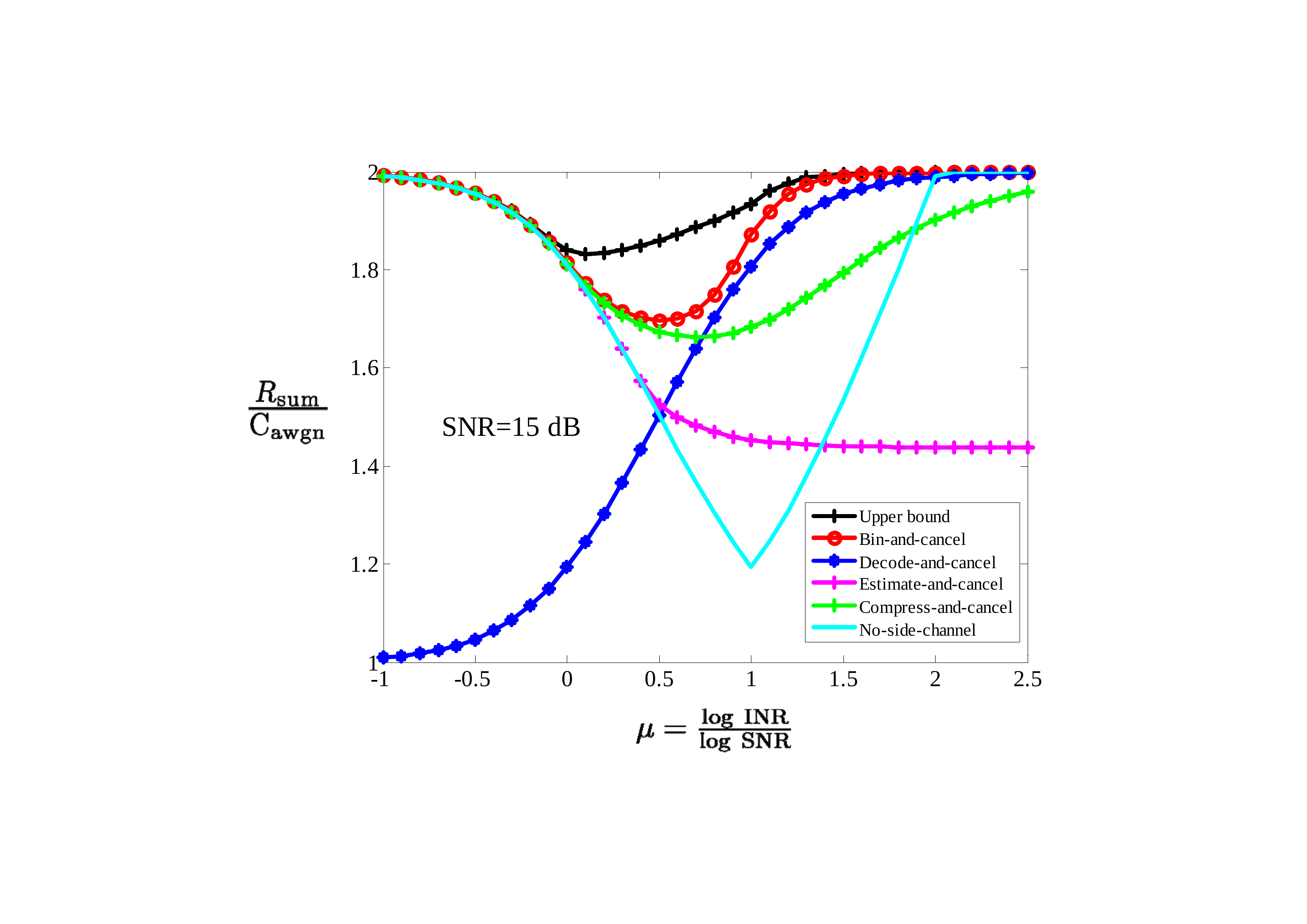}
  \caption{Multiplexing gain of proposed schemes versus no-side-channel achievable schemes for finite SNR when $W=1$ and $\nu=\mu$.}
\label{finiteSNR}
\end{figure}
\begin{figure}[t!] %[h!]
  \centering
%\flushleft
%\flushright
   \includegraphics[width=0.5\textwidth,trim = 40mm
      35mm 30mm 38mm, clip]{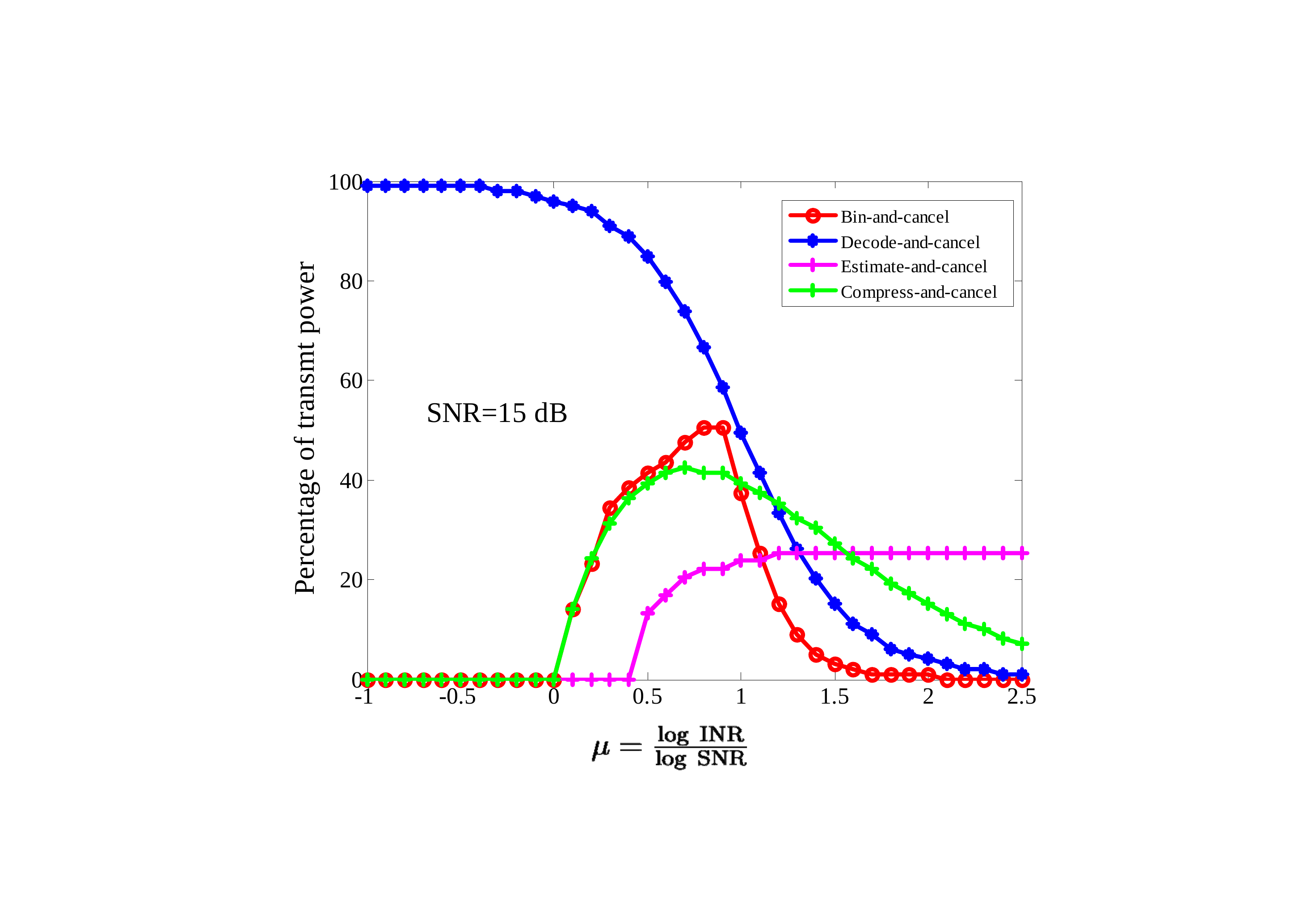}
  \caption{Optimal power allocated to the side-channel which maximizes achievable sum-rates of proposed schemes for finite SNR when $W=1$ and $\nu=\mu$}
\label{OPA}
\end{figure}
\section{Conclusion}
In a three-node full-duplex network where an infrastructure node communicates with half-duplex mobile nodes for both uplink and downlink simultaneously in the same band, one of the key challenges is the inter-node interference. We identify the availability of multi-radio interfaces on current mobile devices to create a wireless side-channel
that allows for new approaches to mitigate inter-node interference. Therefore we propose distributed full-duplex architecture via wireless side-channels for advanced interference management.

In this paper, we presented four distributed full-duplex inter-node interference cancellation schemes by leveraging a device-to-device side-channel for improved
interference cancellation. We characterize the bounds on the capacity region of side-channel assisted three-node network and show that bin-and-cancel scheme can achieve within one bit of the capacity region for all values of channel parameters and is asymptotically optimal. The other three schemes are also analyzed and shown to be optimal in specific regimes. Both analytical and numerical results demonstrate the factors that dominate the performance of
the proposed schemes, which contributes to guiding the design of transceiver architecture.
\section{Appendix}
\subsection{Capacity Region of a Multiple Access Channel with Side-Channel} \label{Append1}
\begin{lemma}
The capacity region of a multiple-access channel with side-channel in Figure~\ref{fig.MACsc} is
\begin{small}
\begin{gather}
\begin{aligned}
R_{1}&\leq I(X_1;Y_1|X_{2})\nonumber \\
R_{2}&\leq I(X_{2};Y_1|X_{1})+I(X_3;Y_3)\nonumber \\
R_{1}+R_{2}&\leq I(X_{1},X_{2};Y_1)+I(X_3;Y_3),
\end{aligned}
\end{gather}
\end{small}
for some $p(x_1)p(x_{2},x_3).$
\end{lemma}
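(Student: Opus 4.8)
The plan is to recognize the channel in Figure~\ref{fig.MACsc} as an ordinary two-user discrete memoryless multiple-access channel in which the second user's channel input is the \emph{pair} $(X_2,X_3)$, the receiver observes $(Y_1,Y_3)$, and the transition law factors as $p(y_1,y_3|x_1,x_2,x_3)=p(y_1|x_1,x_2)\,p(y_3|x_3)$. The capacity region of such a MAC (Ahlswede--Liao) is the closure of the union, over input distributions $p(x_1)p(x_2,x_3)$, of the pentagon $R_1\le I(X_1;Y_1,Y_3|X_2,X_3)$, $R_2\le I(X_2,X_3;Y_1,Y_3|X_1)$, $R_1+R_2\le I(X_1,X_2,X_3;Y_1,Y_3)$. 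So the only real work is to simplify these three mutual-information expressions using the orthogonal product structure of the channel and the independence $X_1\perp(X_2,X_3)$.

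First I would simplify the $R_1$ term: since $Y_3$ is conditionally independent of $(X_1,X_2,Y_1)$ given $X_3$, and $X_1$ is independent of $(X_2,X_3)$, we get $I(X_1;Y_1,Y_3|X_2,X_3)=I(X_1;Y_1|X_2,X_3)=I(X_1;Y_1|X_2)$. For the other two terms, the chain rule together with $p(y_1|x_1,x_2,x_3)=p(y_1|x_1,x_2)$ gives $I(X_2,X_3;Y_1,Y_3|X_1)=I(X_2;Y_1|X_1)+I(X_3;Y_3|X_1,Y_1)$ and $I(X_1,X_2,X_3;Y_1,Y_3)=I(X_1,X_2;Y_1)+I(X_3;Y_3|Y_1)$, and $I(X_3;Y_3|\,\cdot\,)\le I(X_3;Y_3)$ with equality exactly when $X_3$ is independent of the conditioning variables, which holds in particular whenever $X_2\perp X_3$ (then $Y_1\perp Y_3$). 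Here I would invoke the standard observation that the union over all $p(x_2,x_3)$ equals the union over product distributions $p(x_2)p(x_3)$: every pentagon for a correlated $p(x_2,x_3)$ is contained in the larger pentagon whose bounds use only the marginals $p(x_1,x_2)$ and $p(x_3)$, and that larger pentagon is itself attained by the product input with the same marginals. This collapses the region to exactly the stated form.

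For the formal write-up I would give both directions. Achievability: generate $X_1^n$ i.i.d.\ $\sim p(x_1)$ indexing $M_1$, generate $(X_2^n,X_3^n)$ i.i.d.\ $\sim p(x_2,x_3)$ indexing $M_2$ (with $X_3^n$ transmitted over the side-channel), and use a joint-typicality decoder on $(Y_1^n,Y_3^n)$; the three error events --- wrong $M_1$ only, wrong $M_2$ only, both wrong --- give exactly the three mutual-information constraints of the recast MAC via the packing lemma (union bound), and the simplifications above (in particular choosing $X_2\perp X_3$, which is without loss of generality for the region) turn the $Y_3$-contribution into the clean additive term $I(X_3;Y_3)$. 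Converse: apply Fano's inequality; the nonstandard steps are $I(M_1;Y_1^n,Y_3^n|M_2)=I(M_1;Y_1^n|M_2)$ --- because $Y_3^n$ depends on $M_2$ alone (through $X_3^n$) plus noise and is conditionally independent of $(M_1,Y_1^n)$ given $X_3^n$ --- together with $I(M_2;Y_3^n|M_1)=I(X_3^n;Y_3^n)\le\sum_i I(X_{3i};Y_{3i})$; the $Y_1$-terms single-letterize in the usual MAC fashion, and a time-sharing variable is introduced to convexify.

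The step I expect to require the most care, rather than being routine, is the bookkeeping around correlation and the union of pentagons: verifying that replacing $p(x_2,x_3)$ by its marginal product $p(x_2)p(x_3)$ genuinely does not shrink the achievable region, so that the inequalities $I(X_3;Y_3|\,\cdot\,)\le I(X_3;Y_3)$ are met with equality, and relatedly that the time-sharing variable needed in the converse does not enlarge the region beyond the stated single-letter pentagon. Everything else --- the typicality error analysis and the Fano/chain-rule manipulations --- is standard MAC machinery once the channel has been recast in the vector-input form above.
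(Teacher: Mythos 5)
Your proof is correct, but it takes a genuinely different route from the paper. The paper's achievability is a \emph{binning} scheme that mirrors the operational use of the side-channel in bin-and-cancel: $m_2$ is partitioned into $2^{nR_3}$ bins, the bin index $l$ is encoded as $x_3^n(l)$ and sent over the side-channel, $x_2^n(m_2|l)$ is generated conditionally on $x_3^n(l)$, and the receiver decodes sequentially --- first recovers $\hat l$ from $Y_3^n$ alone, then runs a MAC joint-typicality decoder on $Y_1^n$ restricted to codewords in bin $\mathcal{B}(\hat l)$. The paper's converse is a cut-set bound starting from $nR_2\le I(X_2^n,X_3^n;Y_1^n,Y_3^n|X_1^n)$, using the fact that $X_3^n$ is a function of $X_2^n$ (a consequence of both being determined by $m_2$) to kill the term $H(X_3^n|X_1^n,X_2^n)-H(X_3^n|X_1^n,X_2^n,Y_1^n,Y_3^n)$, and the data-processing-type step $I(X_2^n;Y_3^n|X_1^n,Y_1^n)\le I(X_3^n;Y_3^n)$. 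You instead recast the whole thing as a two-sender MAC with super-input $(X_2,X_3)$ and super-output $(Y_1,Y_3)$, invoke Ahlswede--Liao, and then collapse the three mutual informations using $p(y_1,y_3|x_1,x_2,x_3)=p(y_1|x_1,x_2)p(y_3|x_3)$ and $X_1\perp(X_2,X_3)$; you correctly observe that each pentagon's bounds depend only on the marginals $p(x_1),p(x_2),p(x_3)$, so the union over $p(x_1)p(x_2,x_3)$ is the same as the union over product inputs, which makes the inequalities $I(X_3;Y_3|\cdot)\le I(X_3;Y_3)$ tight. Your reduction is cleaner and avoids the sequential-decoding/binning bookkeeping entirely; the paper's version is more verbose but has the virtue of exposing the actual coding architecture (side-channel carries a bin index) that the rest of the paper builds on when describing bin-and-cancel, and its converse avoids having to re-derive single-letterization by inheriting the cut-set form directly.
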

 \begin{figure}[t!] %[h!]
  \centering
%\flushleft
%\flushright
%\scalebox{0.5}{\input{mac.pstex_t}}
    \includegraphics[width=0.4\textwidth,trim = 70mm
      90mm 80mm 70mm, clip]{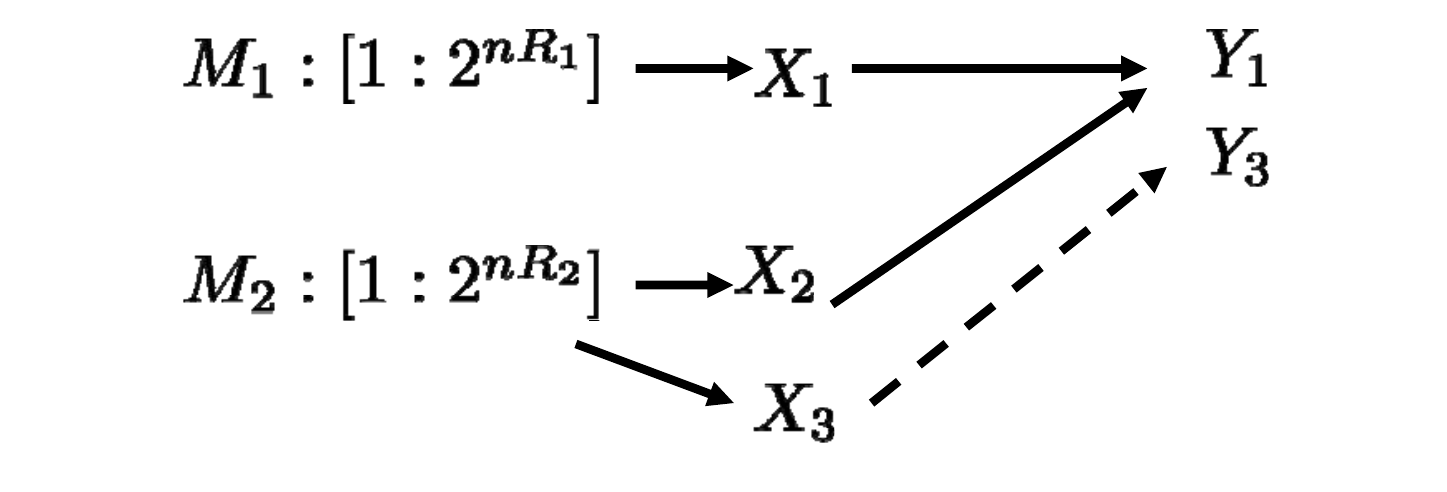}
  \caption{Multiple access channel with side-channel.}
\label{fig.MACsc}
\end{figure}
\begin{proof}
We first give the outline of achievability.
\begin{enumerate}
\item Codebook generation: Fix $p(x_1)$ and $p(x_2,x_3)$ that achieves the lower bound. Randomly and independently generate $2^{nR_1}$ sequence $x_1^n(m_1)$, $m_1\in[1:2^{nR_{1}}]$, each i.i.d. according to distribution $\prod_{i=1}^{n}p_{X_1}(x_{1i}).$ Randomly and independently generate $2^{nR_3}$ sequence $x_3^n(l), l\in[1:2^{nR_3}],$ each i.i.d. according to distribution $\prod_{i=1}^{n}p_{X_3}(x_{3i}).$
    For each $l$, randomly and conditionally independently generate $2^{nR_{2}}$ sequence $x_2^n(m_2|l)$, $m_2\in[1:2^{nR_2}]$, each i.i.d. according to distribution $\prod_{i=1}^{n}p_{X_2|X_3}(x_{2i}|x_{3i})$. Partition the set $[1:2^{nR_{2}}]$ into $2^{nR_3}$
equal size bins, $\mathcal{B}(l)=[(l-1)2^{n(R_{2}-R_3)}+1:l2^{n(R_{2}-R_3)}], l\in[1:2^{nR_3}]$. The codebook and bin assignments are revealed to all parties.
\item Encoding: Upon observing $m_2\in[1:2^{nR_{2}}],$ assign $m_2$ to bin $\mathcal{B}(l)$. The encoders send $x_1^n(m_1), x_2^n(m_2|l)$ over the main-channel, and $x_3^n(l)$ over the side-channel in $n$ blocks.
\item Decoding: Upon receiving $Y_3$, the receiver for the side-channel declares that $\hat{l}$ is sent if it is the unique message such that $(x_3^n(\hat{l}),Y_3^n)\in\mathcal{T}_{\epsilon}^n$; otherwise, it declares an error. Then upon receiving $Y_1$, the receiver for the main-channel declares that $(\hat{m_1}, \hat{m_2})$ are sent if it is the unique pair such that $(x_1^n(\hat{m_1}),x_2^n(\hat{m_2}|\hat{l}),Y_1^n)\in\mathcal{T}_{\epsilon}^n$ and $\hat{m_2}\in\mathcal{B}(\hat{l})$; otherwise, it declares an error.
\end{enumerate}
Therefore, the following rate region is achievable
\begin{small}
\begin{gather}
\begin{aligned}
R_{1}&\leq I(X_1;Y_1|X_{2})\nonumber \\
R_{2}&\leq I(X_{2};Y_1|X_{1})+I(X_3;Y_3)\nonumber \\
R_{1}+R_{2}&\leq I(X_{1},X_{2};Y_1)+I(X_3;Y_3),
\end{aligned}
\end{gather}
\end{small}
for some $p(x_1)p(x_{2},x_3).$

The converse comes from cut-set upper bound.
\begin{small}
\begin{eqnarray}
&&\!\!\!\!\!\!n(R_2)\leq I(X_2^n,X_3^n;Y_1^n,Y_3^n|X_1^n)\\
&&\!\!\!\!\!\!=I(X_2^n;Y_1^n,Y_3^n|X_1^n)+I(X_3^n;Y_1^n,Y_3^n|X_1^n,X_2^n) \\
&&\!\!\!\!\!\!=I(X_2^n;Y_1^n|X_1^n)+I(X_2^n;Y_3^n|X_1^n,Y_1^n)\\
&&\!\!\!\!\!\!+H(X_3^n|X_1^n,X_2^n)-H(X_3^n|X_1^n,X_2^n,Y_1^n,Y_3^n)\\
&&\!\!\!\!\!\!=I(X_2^n;Y_1^n|X_1^n)+I(X_2^n;Y_3^n|X_1^n,Y_1^n) \label{a}\\
&&\!\!\!\!\!\!\leq I(X_2^n;Y_1^n|X_1^n)+I(X_3^n;Y_3^n), \label{b}
\end{eqnarray}
\end{small}
where (\ref{a}) follows that $X_3^n$ is a function of $X_2^n$, thus \begin{small}\[H(X_3^n|X_1^n,X_2^n)=H(X_3^n|X_1^n,X_2^n,Y_1^n,Y_3^n),\]\end{small} and (\ref{b}) follows that
\begin{small}
\begin{gather}
\begin{aligned}
I(X_2^n;Y_3^n|X_1^n,Y_1^n)&=H(Y_3^n|X_1^n,Y_1^n)-H(Y_3^n|X_1^n,X_2^n,Y_1^n)\\
&\leq H(Y_3^n)-H(Y_3^n|X_3^n)\\
&\leq I(X_3^n|Y_3^n).
\end{aligned}
\end{gather}
\end{small}
Similarly, we can also prove that
\begin{small}
\begin{gather}
\begin{aligned}
R_{1}&\leq I(X_1;Y_1|X_{2})\nonumber \\
R_{1}+R_{2}&\leq I(X_{1},X_{2};Y_1)+I(X_3;Y_3).
\end{aligned}
\end{gather}
\end{small}
For Gaussian MAC with side-channel, the above constraints on $R_1, R_2$ and $R_1+R_2$ will be simultaneously maximized by Gaussian inputs.
\end{proof}
\subsection{Calculation of Achievable Rates of Compress-, estimate-and-cancel schemes} \label{Append2}
For compress-and-cancel scheme, with Wyner-Ziv strategy~\cite{wyner}, the achievable rate region is given by
%\begin{small}
\begin{gather}
\begin{aligned}
R_1\leq & I(X_1;Y_1,\widehat{X_2})\\
R_2\leq&I(X_2;Y_2)\\
\text{s.t.}~~&I(X_2;\widehat{X_2}|Y_1)\leq I(X_3;Y_3),
\end{aligned}
\end{gather}
%\end{small}
for distribution product $p(x_1)p(x_2)p(x_3)p(\widehat{x_2}|x_2,y_1)$.

In the Gaussian case, all inputs are assumed to be i.i.d. Gaussian distributed, satisfying $X_1\sim\mathcal{N}(0,P_1), X_2\sim\mathcal{N}(0,\bar{\lambda}P_2)$
and $X_{3}\sim\mathcal{N}(0,\lambda P_2)$, respectively, where $\lambda\in[0,1],\bar{\lambda}+\lambda=1$.
Assuming $\widehat{X_2}$ is a Gaussian quantized version of $X_2$:
%\begin{small}
\begin{gather}
\begin{aligned}
\widehat{X_2}=X_2+Z_q,
\end{aligned}
\end{gather}
%\end{small}
where $Z_q$ is the quantization noise with Gaussian distribution $\mathcal{N}(0,\sigma_q^2)$.
We have
\begin{gather}
\begin{aligned}
R_{\widehat{X_2}|Y_1}&=I(X_2;\widehat{X_2}|Y_1)\\
&=h(\widehat{X_2}|Y_1)-h(\widehat{X_2}|Y_1,X_2)\\
&=W_m\mathrm{log}\left(1+\frac{\bar{\lambda}P_2(\gamma_1P_1+\sigma^2)}{(\gamma_1P_1+\gamma_{21}\bar{\lambda}P_2+\sigma^2)\sigma_q^2}\right),
\end{aligned}
\end{gather}
and
\begin{gather}
\begin{aligned}
I(X_3;Y_3)\leq W_s\mathrm{log}\left(1+\frac{\gamma_3\lambda P_2}{W\sigma^2}\right).
\end{aligned}
\end{gather}
Thus subjecting to the constraint, we have
\begin{gather}
\sigma_q^2\geq\frac{\bar{\lambda}P_2(\gamma_1P_1+\sigma^2)}{(\gamma_1P_1+\gamma_{21}\bar{\lambda}P_2+\sigma^2)\left[\left(1+\frac{\gamma_3\lambda P_2}{W\sigma^2}\right)^W-1\right]}\label{CCR11}.
\end{gather}
And
\begin{gather}
\begin{aligned}
R_1&\leq I(X_1;Y_1,\hat{X_2})\\
&=W_m\mathrm{log}\left(1+\frac{\gamma_1P_1(\bar{\lambda}P_2+\sigma_q^2)}{\gamma_{21}\bar{\lambda}P_2\sigma_q^2+\sigma^2(\bar{\lambda}P_2+\sigma_q^2)}\right). \label{CCR1}
\end{aligned}
\end{gather}
Since $R_1$ is a decreasing function of $\sigma_q^2$, now we can calculate $R_1$ by substituting (\ref{CCR11}) into (\ref{CCR1}).
And $R_2$ is given as
\begin{gather}
\begin{aligned}
R_2&\leq I(X_2;Y_2)\\
&=C\left(\bar{\lambda}\rm SNR_2\right).
\end{aligned}
\end{gather}

For estimate-and-cancel, since $\mathbb{E}(|X_2|^2)\leq \frac{P_2}{(1+K)^2}$, we choose $X_1$ and $X_2$ from independent Gaussian codebooks with $X_1\sim\mathcal{N}(0,P_1), X_2\sim\mathcal{N}\left(0,\frac{P_2}{(1+K)^2}\right)$.
The correlation coefficient between $X_2$ and $X_3$ is one, and $X_{3}\sim\mathcal{N}\left(0,\frac{K^2P_2}{(1+K)^2}\right)$.
For Gaussian channels, by substituting the channel parameters and power constraint, the achievable rate region can be calculated
\begin{gather}
\begin{aligned}
R_1 &\leq  I(X_1;Y_1,Y_3)\\
& = h(Y_1, Y_3)-h(Y_1,Y_3|X_1)\\
& = C\left(\frac{\mathrm {SNR_1}\left(1+\frac{K^2 \rm SNR_{side}}{(1+K)^2}\right)}{1+\frac{K^2 \mathrm{SNR_{side}}}{(1+K)^2}+\frac{\rm INR}{(1+K)^2}}\right)\\
R_2 & \leq  I(X_2;Y_2)\\
& = C\left(\frac{\rm SNR_2}{(1+K)^2}\right).
\end{aligned}
    \end{gather}

%\bibliographystyle{abbrv}
%\scriptsize
%\bibliography{fading}

\begin{thebibliography}{10}
\providecommand{\url}[1]{#1}
\csname url@samestyle\endcsname
\providecommand{\newblock}{\relax}
\providecommand{\bibinfo}[2]{#2}
\providecommand{\BIBentrySTDinterwordspacing}{\spaceskip=0pt\relax}
\providecommand{\BIBentryALTinterwordstretchfactor}{4}
\providecommand{\BIBentryALTinterwordspacing}{\spaceskip=\fontdimen2\font plus
\BIBentryALTinterwordstretchfactor\fontdimen3\font minus
  \fontdimen4\font\relax}
\providecommand{\BIBforeignlanguage}[2]{{%
\expandafter\ifx\csname l@#1\endcsname\relax
\typeout{** WARNING: IEEEtran.bst: No hyphenation pattern has been}%
\typeout{** loaded for the language `#1'. Using the pattern for}%
\typeout{** the default language instead.}%
\else
\language=\csname l@#1\endcsname
\fi
#2}}
\providecommand{\BIBdecl}{\relax}
\BIBdecl

\bibitem{evan}
E.~Everett, M.~Duarte, C.~Dick, and A.~Sabharwal, ``Empowering full-duplex
  wireless communication by exploiting directional diversity,'' in
  \emph{Proc. IEEE Asilomar Conf. Signals, Syst., Comput.}, Nov. 2011.

\bibitem{sahai2011pushing}
A.~Sahai, G.~Patel, and A.~Sabharwal, ``Pushing the limits of full-duplex:
  Design and real-time implementation, \url{http://arxiv.org/abs/1107.0607},'' in
  \emph{Rice University Technical Report TREE1104}, June 2011.

\bibitem{duarte2012design}
M.~Duarte, A.~Sabharwal, V.~Aggarwal, R.~Jana, K.~Ramakrishnan, C.~Rice, and
  N.~Shankaranarayanan, ``Design and characterization of a full-duplex
  multi-antenna system for wifi networks,'' \emph{Submitted to Veh. Technol., IEEE Trans.}, vol. \url{http://arxiv.org/abs/1210.1639}, Oct.
  2012.

\bibitem{Choi}
J.~I. Choi, M.~Jain, K.~Srinivasan, P.~Levis, and S.~Katti, ``Achieving single
  channel, full duplex wireless communication,'' in \emph{Proc. Sixteenth Annual International Conf. Mobile Comput.
  Netw.}, ser. MobiCom '10. New
  York, NY, USA: ACM.

\bibitem{duarte2011experiment}
M.~Duarte, C.~Dick, and A.~Sabharwal, ``Experiment-driven characterization of
  full-duplex wireless systems,'' \emph{To appear: Wireless Commun., IEEE
  Trans.}.

\bibitem{radunovic2010rethinking}
B.~Radunovic, D.~Gunawardena, P.~Key, A.~Proutiere, N.~Singh, V.~Balan, and
  G.~Dejean, ``Rethinking indoor wireless mesh design: Low power, low
  frequency, full-duplex,'' in \emph{Wireless Mesh Netw. (WIMESH 2010), 2010
  Fifth IEEE Workshop}. IEEE, 2010.

\bibitem{Everett12Thesis}
E.~Everett, A.~Sahai, and A.~Sabharwal, ``Passive self-interference suppression
  for full-duplex infrastructure nodes,'' \emph{Submitted to Wireless
  Commun., IEEE Trans.}, vol. \url{http://arxiv.org/abs/1302.2185},
  2013.
\bibitem{achal}
A.~Sahai, V.~Aggarwal, M.~Yuksel and Sabharwal, ``Capacity of All Nine Models of Channel Output Feedback for the Two-User Interference Channel,'' 
\emph{Inf. Theory, IEEE Trans.}, vol.~59, Nov. 2013. 
\bibitem{Smallcell}
Small cell forum: [Online]. Available: \url{http://www.smallcellforum.org/}.

\bibitem{Jingwen}
J.~Bai and A.~Sabharwal, ``Decode-and-cancel for interference cancellation in
  full-duplex networks,'' in \emph{Proc. IEEE Asilomar Conf. Signals, Syst., Comput.}, Nov 2012.

\bibitem{Zchannel}
L.~Zhou and W.~Yu, ``Gaussian z-interference channel with a relay link:
  Achievability region and asymptotic sum capacity,'' \emph{Inf. Theory,
  IEEE Trans.}, vol.~58, no.~4, pp. 2413--2426, 2012.

\bibitem{han1981new}
T.~Han and K.~Kobayashi, ``A new achievable rate region for the interference
  channel,'' \emph{Inf. Theory, IEEE Trans.}, Jan. 1981.

\bibitem{Car}
A.~Carleial, ``A case where interference does not reduce capacity (corresp.),''
  \emph{Inf. Theory, IEEE Trans.}, vol.~21, Sept. 1975.

\bibitem{D}
R.~Etkin, D.~Tse, and H.~Wang, ``Gaussian interference channel capacity to
  within one bit,'' \emph{Inf. Theory, IEEE Trans.}, vol.~54,
  Dec. 2008.

\bibitem{wyner}
A.~Wyner and J.~Ziv, ``The rate-distortion function for source coding with side
  information at the decoder,'' \emph{Inf. Theory, IEEE Trans.}, vol.~22, 1976.

\bibitem{Cog1}
S.~Seyedmehdi, J.~Jiang, Y.~Xin, and X.~Wang, ``An improved achievable rate
  region for causal cognitive radio,'' in \emph{Proc. IEEE ISIT 2009},
  July 2009.

\bibitem{Cog2}
Y.~Cao and B.~Chen, ``Interference channel with one cognitive transmitter,'' in
  \emph{Proc. IEEE Asilomar Conf. Signals, Syst., Comput.}, Oct. 2008.

\end{thebibliography}
\bibliographystyle{IEEEtran}
 \begin{IEEEbiographynophoto}{Jingwen Bai}
%\begin{IEEEbiography}%{Jingwen Bai}
%Jingwen Bai % or if you just want to reserve a space for a photo:
received her B.E degree in Electronic Science and Technology from Beijing University of Posts and Telecommunications, Beijing, China, in 2011, and the M.S degree in Electrical and Computer Engineering from Rice University, Houston, TX, in 2013. She is currently working toward her Ph.D. at Rice University, Houston, TX under the guidance of Dr. Ashutosh Sabharwal. Her research interest includes information theory and wireless communication.
%\end{IEEEbiography}
 \end{IEEEbiographynophoto}
%\begin{IEEEbiography}[{\includegraphics[width=1in,height=1.25in,clip,keepaspectratio]{Sabharwal005.eps}}]{Ashutosh Sabharwal}
 \begin{IEEEbiographynophoto}{Ashutosh Sabharwal}
%\begin{IEEEbiography}{Ashutosh Sabharwal}
%Ashutosh Sabharwal 
(S$^\prime$91-M$^\prime$99-SM$^\prime$04-F$^\prime$14) received the B.Tech. degree from
the Indian Institute of Technology, New Delhi, in 1993 and the M.S. and Ph.D.
degrees from The Ohio State University, Columbus, in 1995 and 1999, respectively.
He is currently a Professor in the Department of Electrical and
Computer Engineering, Rice University, Houston, TX. His research interests
include the areas of information theory and communication algorithms for wireless systems.
Dr. Sabharwal was the recipient of Presidential Dissertation Fellowship
Award in 1998, and the founder of WARP project (http://warp.rice.edu).
 \end{IEEEbiographynophoto}
%\end{IEEEbiography}
\end{document}